
\documentclass[journal,twocolumn]{IEEEtran}
\usepackage[dvips]{graphicx} 
\usepackage{amsfonts}
\usepackage{amssymb}
\usepackage{amscd}
\usepackage{amsmath}    
\usepackage{bbm}
\usepackage{mathabx}
\usepackage{enumerate}
\usepackage{epsfig}
\usepackage{subfigure}
\usepackage{xcolor}
\usepackage{amsthm}
\usepackage{diagbox}
\usepackage{amsthm}
\usepackage{mathdots}
\usepackage{bm}
\usepackage{nicefrac}
\usepackage{tabularx}
\usepackage{multirow}
\usepackage[justification=justified,font=footnotesize]{caption}

\usepackage{algcompatible}
\usepackage{newfloat}

\newtheorem{theorem}{Theorem}
\newtheorem{lemma}[theorem]{Lemma}
\newtheorem{corollary}[theorem]{Corollary}

\newtheorem{definition}[theorem]{Definition}

\newtheorem{remark}[theorem]{Remark}
\newtheorem{proposition}[theorem]{Proposition}

\newlength{\blank}
\settowidth{\blank}{\emph{~}}
\renewenvironment{proof}[1][{\hspace{-\blank}}]{{\noindent\textbf{Proof~{#1}.\ }}}{\hfill\qed}
\newcommand{\ket}[1]{|#1\rangle}
\newcommand{\bra}[1]{\langle#1|}
\newcommand{\braket}[2]{\langle #1 | #2 \rangle}
\newcommand{\ketbra}[2]{| #1 \rangle\langle #2 |}
\newcommand{\proj}[1]{| #1 \rangle\langle #1 |}

\mathchardef\ordinarycolon\mathcode`\:
\mathcode`\:=\string"8000
\def\vcentcolon{\mathrel{\mathop\ordinarycolon}}
\begingroup \catcode`\:=\active
  \lowercase{\endgroup
  \let :\vcentcolon
  }

\newcommand{\nc}{\newcommand}
\nc{\rnc}{\renewcommand}
\nc{\beq}{\begin{equation}}
\nc{\eeq}{{\end{equation}}}
\nc{\beqa}{\begin{eqnarray}}
\nc{\eeqa}{\end{eqnarray}}
\nc{\lbar}[1]{\overline{#1}}
\nc{\avg}[1]{\langle#1\rangle}
\nc{\Rank}{\operatorname{Rank}}
\nc{\smfrac}[2]{\mbox{$\frac{#1}{#2}$}}
\nc{\tr}{\operatorname{Tr}}
\newcommand{\Tr}{\operatorname{Tr}}
\nc{\ox}{\otimes}
\nc{\dn}{\downarrow}
\nc{\cA}{\mathcal{A}}
\nc{\cB}{\mathcal{B}}
\nc{\cC}{\mathcal{C}}
\nc{\cD}{\mathcal{D}}
\nc{\cE}{\mathcal{E}}
\nc{\cF}{\mathcal{F}}
\nc{\cG}{\mathcal{G}}
\nc{\cH}{\mathcal{H}}
\nc{\cI}{\mathcal{I}}
\nc{\cJ}{\mathcal{J}}
\nc{\cK}{\mathcal{K}}
\nc{\cL}{\mathcal{L}}
\nc{\cM}{\mathcal{M}}
\nc{\cN}{\mathcal{N}}
\nc{\cO}{\mathcal{O}}
\nc{\cP}{\mathcal{P}}
\nc{\cR}{\mathcal{R}}
\nc{\cS}{\mathcal{S}}
\nc{\cT}{\mathcal{T}}
\nc{\cX}{\mathcal{X}}
\nc{\cZ}{\mathcal{Z}}
\nc{\csupp}{{\operatorname{csupp}}}
\nc{\qsupp}{{\operatorname{qsupp}}}
\nc{\var}{\operatorname{var}}
\nc{\rar}{\rightarrow}
\nc{\lrar}{\longrightarrow}
\nc{\polylog}{\operatorname{polylog}}
\nc{\1}{\mathbbm{1}}
\nc{\id}{{\operatorname{id}}}
\nc{\RR}{{{\mathbb R}}}
\nc{\CC}{{{\mathbb C}}}
\nc{\FF}{{{\mathbb F}}}
\nc{\NN}{{{\mathbb N}}}
\nc{\ZZ}{{{\mathbb Z}}}
\nc{\PP}{{{\mathbb P}}}
\nc{\QQ}{{{\mathbb Q}}}
\nc{\UU}{{{\mathbb U}}}
\nc{\EE}{{{\mathbb E}}}

\nc{\GOCC}{{\mathrm{GOCC}}}
\nc{\Wplus}{{\mathrm{W+}}}

\newcommand{\ip}[2]{\langle #1|#2\rangle}
\newcommand{\op}[2]{|#1\rangle \langle #2|}

\newcommand{\mbf}{\mathbf}

\newcommand{\mc}{\mathcal}

\newcommand{\mf}[1]{\mathfrak{ #1}}

\renewenvironment{proof}{\medskip\noindent\textbf{Proof.}}{\hfill$\blacksquare$\medskip}
\newenvironment{proof-of}[1]{\medskip\noindent\textbf{Proof of {#1}.}}{\hfill$\blacksquare$\medskip}

\newcommand{\yunchao}[1]{{#1}}
\newcommand{\Qi}[1]{{#1}}
\newcommand{\andreas}[1]{{#1}}

\begin{document}

\title{One-Shot Coherence Distillation: \andreas{Towards Completing the Picture}}

\author{Qi Zhao,
        Yunchao Liu, Xiao Yuan, Eric Chitambar
        and~Andreas~Winter
\thanks{Qi Zhao and Yunchao Liu are with Center for Quantum Information, Institute for Interdisciplinary Information Sciences,
Tsinghua University, Beijing 100084, China. Email: zhaoqithu10@gmail.com}
\thanks{Xiao Yuan is with Department of Materials, University of Oxford, Parks Road, Oxford OX1 3PH, United Kingdom.}
\thanks{Eric Chitambar is with the Department of Electrical and Computer Engineering at the University of Illinois, Urbana-Champaign, Illinois 61801, USA. Email: echitamb@illinois.edu.}%
\thanks{Andreas Winter is with ICREA---Instituci\'o Catalana de Recerca i Estudis Avan\c{c}ats, Pg.~Lluis Companys, 23, ES-08001 Barcelona, Spain and also with F\'{\i}sica Te\`{o}rica: Informaci\'{o} i Fen\`{o}mens Qu\`{a}ntics, Departament de F\'{\i}sica, Universitat Aut\`{o}noma de Barcelona, ES-08193 Bellaterra (Barcelona), Spain. Email: andreas.winter@uab.cat.}%
\thanks{24 January 2019.}}

%
%
%
%

\maketitle

\begin{abstract}
The resource framework of quantum coherence was introduced
by Baumgratz, Cramer and Plenio [\emph{Phys. Rev. Lett.} 113, 140401 (2014)]
and further developed by Winter and Yang [\emph{Phys. Rev. Lett.} 116, 120404 (2016)].
We consider the one-shot problem of distilling pure coherence from
a single instance of a given resource state. Specifically, we determine
the distillable coherence with a given fidelity under
incoherent operations (IO) through a generalisation of the Winter-Yang
protocol.  This is compared to the distillable coherence under
maximal incoherent operations (MIO) and
dephasing-covariant incoherent operations (DIO), which can be cast as a
semidefinite programme, that has been presented previously by Regula
\emph{et al.} [\emph{Phys. Rev. Lett.} 121, 010401 (2018)]. Our results are given in terms of
a smoothed min-relative entropy distance from the incoherent set of states,
and a variant of the hypothesis-testing relative entropy distance, respectively.
The one-shot distillable coherence is also related to one-shot randomness extraction.
Moreover, from the one-shot formulas under IO, MIO, DIO, we can recover the optimal distillable
rate in the many-copy asymptotics, yielding the relative entropy of
coherence. These results can be compared with previous work by some of the present
authors [Zhao \emph{et al.}, \emph{Phys. Rev. Lett.} 120, 070403 (2018)]
on one-shot coherence formation under IO,
MIO, DIO and also SIO. This shows that the amount
of distillable coherence is essentially the same for IO, DIO, and MIO,
despite the fact that the three classes of operations are very different.
We also relate the distillable coherence under strictly incoherent
operations (SIO) to a constrained hypothesis testing problem and explicitly
show the existence of bound coherence under SIO in the asymptotic regime.
\end{abstract}


\section{Introduction}
\label{sec:intro}
Coherence, as the signature of non-classicality, has  applications in many quantum information processing tasks, including cryptography \cite{coles2016numerical}, metrology \cite{giovannetti2011advances}, randomness generation \cite{Yuan15intrinsic, ma2017source}, biological systems \cite{plenio2008dephasing, rebentrost2009role} and thermodynamics \cite{Aberg14,Horodecki15,Lostaglio15,lostaglio2015description,narasimhachar2015low}.
Recently, a resource theory framework of quantum coherence has been introduced by Baumgratz, Cramer and Plenio \cite{baumgratz14}
(after prior work of \AA{}berg \cite{Aaberg2006}, as well as Braun and Georgeot \cite{BraunGeorgeot}).  A general quantum resource theory consists of two objects: a set of free states and a set of free operations, with the latter acting invariantly on the former.  In the resource theory of coherence, the free states $\mathcal{I}$ for a $d$-dimensional Hilbert space 
is the set of density matrices that are invariant under conjugation by the phase unitary
\begin{equation}\label{phaseunitary}
Z = \left(\begin{array}{cccc}
        1 & 0            & \cdots & 0 \\
        0 & e^{2\pi i/d} &        & \vdots \\
        \vdots &         & \ddots & 0  \\
        0 & \cdots       &      0 & e^{2\pi i(d-1)/d}
      \end{array}\right),
\end{equation}
with $Z$ being expressed in \textit{a priori} fixed computational basis $\{\ket{x}\}_{x=1}^{d}$.
Furthermore, different physical and mathematical motivations have led to the proposal and study of different classes of free operations, most notably the maximally incoherent operations (MIO) \cite{Aaberg2006}, the dephasing-covariant incoherent operation (DIO) \cite{Chitambar16prl,Marvian16}, the incoherent operations (IO) \cite{baumgratz14}, and the strictly incoherent operations (SIO) \cite{Winter16,Yadin16}.  For instance, SIO emerges as a natural class of operations to consider when quantifying visibility in interferometer experiments \cite{Biswas-2017a}.  Several coherence measures have been introduced to quantify the amount of coherence in a state, including the relative entropy and $\ell_1$-norm of coherence \cite{baumgratz14}, the coherence of formation \cite{Yuan15intrinsic,Aaberg2006}, and the robustness of coherence \cite{Napoli16}, which have all been shown to possess different operational meanings \cite{Yuan15intrinsic,Winter16,Napoli16,Rana17}. An overview on  recent developments of the resource theory of coherence can be found in Ref.~\cite{Streltsov-review}.

An important problem in the resource theory is convertibility of states via free operations, especially those between an arbitrary state $\rho$ and a maximal resource state $\ket{\Psi_M}$.  In an $M$-level system, the superposition state $\ket{\Psi_M}=\frac{1}{\sqrt{M}}\sum_{i=1}^{M}\ket{i}$ has the maximal coherence with respect to the aforementioned, and indeed all, coherence measures, and the qubit state $\ket{\Psi_2}$ serves as resource unit in the theory, the ``cosbit'' \cite{Diaz18}.
The process of converting a given state $\rho$ to $\ket{\Psi_M}$ is referred to as \emph{coherence distillation} and the reverse process as \emph{coherence dilution}.
In the asymptotic case where an unbounded number of independent and identically distributed
(i.i.d.) copies of the initial state are provided, the optimal rates of asymptotic distillation and dilution are quantified by the relative entropy of coherence and the coherence of formation, respectively \cite{Winter16}.

In practical scenarios, i.i.d. resources are not available and an analysis of non-asymptotic tasks becomes crucial. In recent work \cite{Zhao-one-shot}, the formation problem was addressed in the one-shot setting under all four of the above operational classes.  Subsequently, the converse question of one-shot coherence distillation was solved for the classes MIO and DIO \cite{Regula-one-shot}.  In the present paper, we close the remaining gap by providing an analysis of general one-shot coherence distillation under IO and SIO. Our results show that, while not being exactly identical, the three classes MIO, DIO, and IO all lead to essentially the same expression for the distillable coherence in terms of a min-relative entropy distance from the incoherent states; the differences are in the smoothing parameters and universal additive terms. When extended to the asymptotic case, our one-shot results imply that MIO, DIO, and IO all have distillable coherence given by the relative entropy, thereby recovering earlier work found in \cite{Winter16}.  In contrast, we show that the coherence distillation by SIO can behave quite differently than its more general counterparts, and we characterize the one-shot distillable coherence as a constrained hypothesis testing problem. We further show that coherence can be a \textit{bound} resource under SIO, meaning that states exist with zero distillable coherence but nonzero coherence cost.
We also show the relationship among the extractable randomness, IO and DIO distillable rate, which are are indeed closely related. The reason is that the distillation proceeds by a kind of decoupling process, and can thus be related to randomness extraction.

The structure of the remainder of the paper is as follows:
In Section~\ref{sec:dilution+distillation} we review the four mentioned
classes of incoherent operations and introduce the dilution and distillation
tasks for coherence in the one-shot setting. We present the prior results
on dilution in all four cases, and in Section~\ref{sec:dios-mio} review
the prior results on distillation under MIO and DIO. In Section~\ref{sec:des-pudels-kern:io}
we come to the first main result of the present paper, a tight one-shot
characterization of distillation with IO, showing in particular a distillation
protocol achieving the lower bound with an operation that is at the same time
IO and DIO.
Then, in Section~\ref{sec:sio} we give a one-shot formula for distillation under SIO,
and most importantly, show the existence of bound coherence in this model.
\yunchao{In Section~\ref{sec:AEP} we show how the obtained
one-shot formulas for IO, DIO and MIO imply the previously known asymptotic
distillation rate $C_r(\rho)$, and finally Section~\ref{sec:randomness} contains the discussion on the relation between randomness
extraction and coherence distillation, after which we conclude.}

\section{Coherence Dilution \protect\\ and Distillation}
\label{sec:dilution+distillation}

\subsection{Classes of incoherent operations}

A general resource theory of coherence is constructed by imposing restrictions on the allowed completely positive trace-preserving (CPTP) maps $\Lambda:\mc{L}(A)\to\mc{L}(B)$, where $\mc{L}(A)$ denotes the space of linear operators acting on a finite-dimensional Hilbert space $A$ and likewise for $\mc{L}(B)$.  One necessary restriction is that $\Lambda$ acts invariantly on the set of incoherent states $\mc{I}$; i.e. $\Lambda(\delta)\in\mc{I}$ whenever $\delta\in\mc{I}$.  The completely dephasing map $\Delta(\rho)=\sum_{x=1}^{d}\op{x}{x}\rho\op{x}{x}$ plays an important role in this theory as it destroys all coherence in a state, and it therefore maps any resource state to a free one.  The following operations classes have been proposed in \cite{baumgratz14,Aaberg2006,Chitambar16prl,Marvian16,Winter16,Yadin16}, each motivated by different physical considerations.

\begin{LaTeXdescription}
  \item[MIO] \emph{Maximal Incoherent Operations} \cite{Aaberg2006} are characterized by
             $\Lambda(\delta)\in\mathcal{I}$ for all $\delta\in\mathcal{I}$, which
             may be expressed as the identity $\Lambda \circ \Delta = \Delta \circ \Lambda \circ \Delta$;
  \item[DIO] \emph{Dephasing-Covariant Incoherent Operations} \cite{Chitambar16prl,Marvian16}
             satisfy the stronger condition
             $\Lambda \circ \Delta = \Delta \circ \Lambda$;
  \item[IO]  \emph{Incoherent Operations} \cite{baumgratz14} are those \Qi{MIO} with a Kraus decomposition
             $\Lambda(\rho) =\sum_\alpha K_\alpha\rho K_\alpha^\dagger$,
             such that $K_\alpha\delta K_\alpha^\dagger/\Tr(K_\alpha\delta K_\alpha^\dagger)\in\mathcal{I}$
             for all $\alpha$ and $\delta\in\mathcal{I}$;
  \item[SIO] \emph{Strictly Incoherent Operations} \cite{Winter16} are those IO with a
             Kraus decomposition
             $\Lambda(\rho)=\sum_\alpha K_\alpha\rho K_\alpha^\dag$
             such that $K_\alpha=\sum_{x} c_{\alpha,x}\ketbra{f_\alpha(x)}{x}$,
             and $f_\alpha:[d]=\{1,\cdots,d\}\to[d]$ is a one-to-one function for all $\alpha$.
\end{LaTeXdescription}

\noindent
The relationships among these classes are
\begin{equation}
\begin{aligned}
&\text{SIO} \varsubsetneq  \text{IO} \varsubsetneq \text{MIO}, \\
&\text{SIO} \varsubsetneq  \text{DIO} \varsubsetneq \text{MIO};
\end{aligned}
\end{equation}
note that IO is not included in DIO and vice versa.

For later use, and because it will turn out to be significant, we also
give a name to the intersection of IO and DIO,
\[
  \text{DIIO} = \text{IO} \cap \text{DIO},
\]
which we dub \emph{dephasing-covariant incoherent IO}.

\begin{table*}[ht]
\centering
\begin{minipage}{0.81\textwidth}
\centering
\begin{tabular}{|rl|llll|}
  \hline
  & & \ MIO\phantom{======} & DIO \phantom{======} & IO & SIO\\[1mm]
  \hline
  \multirow{2}{*}{\phantom{=}One-shot $\biggl\{$}
    & distillation\phantom{=} & \ $\widetilde{C}_{H}^{\varepsilon}$ \cite{Regula-one-shot}
                        & $\widetilde{C}_{H}^{\varepsilon}$ \cite{Regula-one-shot}
                             & $\mathbf{ C_{\min}^{\varepsilon'}}$ \textbf{[$\ast$]} & $\mathbf{Thm.~\ref{th:SIO}}$ \textbf{[$\ast$]} \\[1mm]
    & formation    & \ $C_{\max}^{\varepsilon}$ \cite{Zhao-one-shot}
                        & $C_{\Delta,\max}^{\varepsilon}$ \cite{Zhao-one-shot}
                             & $C_{0}^{\varepsilon}$  \cite{Zhao-one-shot}
                                 & $C_{0}^{\varepsilon}$ \cite{Zhao-one-shot}\phantom{=} \\[1mm]
  \hline
  \multirow{2}{*}{\phantom{=}Asymptotic $\biggl\{$}
    & distillation & \ $C_r$ \cite{Regula-one-shot}
                        & $C_r$ \cite{Regula-one-shot}
                            & $C_r$ \cite{Winter16} &  $\mathbf{\neq  C_r}$ \textbf{[$\ast$]}   \\[1mm]
    & formation    & \ $C_r$ \cite{Zhao-one-shot,Eric:MIO-CoF}
                        & $C_r$ \cite{Zhao-one-shot,Eric:MIO-CoF}
                            & $C_f$ \cite{Winter16} & $C_f$ \cite{Winter16}\phantom{=} \\[1mm]
  \hline
\end{tabular}
\vspace{3mm}
\caption{\rm Coherence distillation and formation rates, with attributions where
         the corresponding result was first obtained. Our new contributions
         (in bold, and marked by asterisks)
         are the IO and SIO distillation rates in the one-shot setting;
         furthermore, we prove that MIO, DIO and IO one-shot
         distillation rates differ only by varying the smoothing parameter
         and by universal additive terms. The one-shot
         distillation rate under SIO is converted into an optimization problem in
         Theorem~\ref{th:SIO}, and in the asymptotic case, we prove the existence
         of bound coherence under SIO, showing that the rate is different from
         the relative entropy of coherence in general.}
\label{Table:results}
\end{minipage}
\end{table*}

\subsection{Coherence dilution}

The one-shot coherence dilution problem characterizes the minimal resource required for the formation of a target state with an allowed error $\varepsilon$. The definition of one-shot coherence dilution is as follows.
\begin{definition}
  Let $\mathcal{O}\in\{\text{MIO},\,\text{DIO},\,\text{IO},\,\text{SIO}\}$ denote some class of
  incoherent operations. Then for a given state $\rho$ and $\varepsilon\ge 0$,
  the one-shot coherence formation cost with error $\varepsilon$ under $\mathcal{O}$
  is defined as
  \begin{equation}\label{Eq:formationIO}
    C_{c,\mathcal{O}}^\varepsilon(\rho)
        = \min_{\Lambda\in \mathcal{O}}
          \bigl\{\log M \,:\, F(\Lambda(\Psi_M),\rho)^2 \geq 1-\varepsilon \bigr\}.
  \end{equation}
  where $F(\rho,\sigma) = \Tr \sqrt{\sqrt{\rho}\sigma\sqrt{\rho}}$
  is the usual mixed state fidelity.
  By default, here and throughout the paper, $\log \equiv \log_2$ is the binary
  logarithm, in accordance with information theoretic use.
\end{definition}

In Ref. \cite{Zhao-one-shot}, several coherence monotones were proposed to estimate
the one-shot coherence cost.  The first two are based on the max relative entropy
$D_{\max}(\rho\|\sigma)=\log\min\{\lambda\;|\;\rho\leq\lambda\sigma\}$. \yunchao{Alternatively, one can introduce this quantity using the \andreas{sandwiched quantum $\alpha$-R\'enyi} divergence
\begin{equation}\label{alphadivergence}
    \tilde{D}_{\alpha}(\rho\|\sigma)=\frac{1}{\alpha-1}\log\left(\Tr\left[\left(\sigma^{\frac{1-\alpha}{2\alpha}}\rho\sigma^{\frac{1-\alpha}{2\alpha}}\right)^{\alpha}\right]\right),
  \end{equation}
letting $\alpha\to\infty$~\cite{2013MartinRenyi}.}
One then defines
\begin{align}
C_{\max}(\rho)&=\min_{\delta\in\mc{I}}D_{\max}(\rho\|\delta)=\log\min\{\lambda\;|\;\rho\leq\lambda\delta\},\\
C_{\Delta,\max}(\rho)&=\min_{\sigma\in \overline{A_\rho}}D_{\max}(\rho\|\sigma)=\log\min\{\lambda\;|\;\rho\leq\lambda\Delta(\rho)\},
\end{align}
where $A_\rho=\{\sigma\geq 0\;|\;\sigma+t\rho=(1+t)\Delta(\rho),\; t>0\}$ \cite{Chitambar16pra} and $\overline{A_\rho}$ is its closure.  The quantity $C_{\max}(\rho)$ is a monotone under MIO, while $C_{\Delta,\max}(\rho)$ is a monotone under DIO, but not IO in general.  However, for a pure state $\ket{\psi}$, $C_{\Delta,\max}$ reduces to its \textit{incoherent rank} \cite{Chitambar16pra}, defined as
\begin{equation}
C_0(\psi):=S_0(\Delta(\psi))=\log\text{rank}[\Delta(\psi)],
\end{equation}
and the incoherent rank is a monotone under IO even for stochastic pure state transformations \cite{baumgratz14}.  Thus, one can obtain a general mixed state monotone for IO by taking a convex roof extension:
\begin{align}
C_{0}(\rho)&:=\min_{p_i,\ket{\psi_i}}\max_{i} C_{0}(\psi_i),
\end{align}
where the minimization is over all pure state ensembles satisfying $\rho=\sum_i p_i\op{\psi_i}{\psi_i}$.  Note, the minimization is followed by a maximization rather than an averaging over the $\ket{\psi_i}$ (as typically done in such measures) because $C_{0}(\psi)$ is a stochastic IO monotone.

Since we allow $\varepsilon$ error in our one-shot tasks, we likewise apply an $\varepsilon$ smoothing to our coherence measures.  Depending on the quantity, this is done by either minimizing or maximizing the measure over all states $\rho'$ lying in an $\varepsilon$-ball around $\rho$.  For example, if $C$ represents any one of the measures $C_{\max}$, $C_{\Delta,\max}$ or $C_0$, its $\varepsilon$-smoothed variant is given by
\begin{equation}
C^\varepsilon(\rho)=\min_{\rho'\in B_\varepsilon(\rho)}C(\rho').
\end{equation}
Here $B_\varepsilon(\rho)=\{\rho':P(\rho',\rho)<\varepsilon\}$ is
defined with respect to the purified distance $P(\rho',\rho) = \sqrt{1-F(\rho',\rho)^2}$.
Note that the purified distance is related to the trace distance by \cite{Fuchs-vandeGraaf}
\[
  1-\sqrt{1-P(\rho',\rho)^2}\leq \frac12\|\rho-\rho'\|_1 \leq P(\rho',\rho).
\]
As shown in \cite{Zhao-one-shot}, the measures $C^\varepsilon_{\max}$, $C^\varepsilon_{\Delta,\max}$
and $C^\varepsilon_0$ precisely characterize the one-shot coherence formation for MIO, DIO,
and IO/SIO, respectively (see Table \ref{Table:results}). Note that our definition of
$B_\varepsilon(\rho)$
differs from \cite{Zhao-one-shot} by replacing $\varepsilon \leftrightarrow \sqrt{\varepsilon}$.

The \yunchao{regularized} (many-copy) coherence cost of formation can then be defined using the one-shot quantities.
For $\mathcal{O}\in\{\text{MIO},\,\text{DIO},\,\text{IO},\,\text{SIO}\}$, the asymptotic rate of coherence cost for a state $\rho$ under operations $\mc{O}$ is defined as
\yunchao{
\begin{equation}\label{asymptoticformation}
  C_{c,\mathcal{O}}^{\infty}(\rho) := \limsup_{\varepsilon \to 0^+}\limsup_{n\to\infty}\frac{1}{n}C_{c,\mathcal{O}}^\varepsilon(\rho^{\otimes n}).
\end{equation}
}
Remarkably, for all four operational classes, the coherence cost has a single-letter characterization.
For MIO and DIO, the \yunchao{regularized coherence cost} is given by the relative entropy of coherence, which we call the asymptotic coherence cost under $\text{MIO},\,\text{DIO}$
\cite{Zhao-one-shot,Eric:MIO-CoF},
\begin{equation}
  C_r(\rho)=\min_{\delta\in\mc{I}}D(\rho\|\delta)=S(\Delta(\rho))-S(\rho),
\end{equation}
where $D(X\|Y)=-\tr[X(\log Y-\log X)]$ is the relative entropy\yunchao{, and we have
\begin{equation}\label{MIODIOdilution}
    \lim_{\varepsilon\to 0^+}\lim_{n\to\infty}\frac{1}{n}C_{c,MIO/DIO}^\varepsilon\left(\rho^{\otimes n}\right)=C_r(\rho).
\end{equation}
In the following, we use $\lim$ instead of $\limsup$ or $\liminf$ for simplicity.} On the other hand, for IO and SIO, the \yunchao{regularized coherence cost
is given by the so-called coherence of formation \cite{Winter16}, which we call the asymptotic coherence cost under $\text{IO},\,\text{SIO}$ and is defined as
\begin{equation}
\label{Eq:CoF}
C_f(\rho)=\min_{p_i,\ket{\psi_i}}\sum_i p_i C_{r}(\psi_i),
\end{equation}
where
\begin{equation}
    \lim_{\varepsilon\to 0^+}\lim_{n\to\infty}\frac{1}{n}C_{c,IO/SIO}^\varepsilon\left(\rho^{\otimes n}\right)=C_f(\rho).
\end{equation}
}
In contrast to the entanglement of formation, the coherence of formation is an
additive function \cite{Winter16}.

\subsection{Coherence distillation}
We next turn to the main focus of this paper, which is the task of transforming a given state $\rho$ into a maximally coherent pure state.  The one-shot distillation of the problem with $\varepsilon$ error is stated as follows.
\begin{definition}
  Let $\mathcal{O}\in\{\text{MIO},\,\text{DIO},\,\text{IO},\,\text{SIO}\}$ denote some class of
  incoherent operations. Then for a given state $\rho$ and $\varepsilon\ge 0$,
  the one-shot coherence distillation rate with error $\varepsilon$ under $\mathcal{O}$
  is defined as
  \begin{equation}\label{Eq:distillationIO}
    C_{d,\mathcal{O}}^\varepsilon(\rho)
        = \max_{\Lambda\in \mathcal{O}}
          \bigl\{\log M \,:\, F(\Lambda(\rho),\Psi_M)^2 \geq 1-\varepsilon \bigr\}.
  \end{equation}
\end{definition}

In order to quantify the one-shot distillable rates, we first introduce two additional functions.  The first one is based on the min quantum R\'{e}nyi divergence \yunchao{$D_{\min}(\rho\|\sigma)=-\log F(\rho,\sigma)^2=\tilde{D}_{1/2}(\rho\|\sigma)$~\cite{2013MartinRenyi}, $\alpha=\frac{1}{2}$ in Eq.~\eqref{alphadivergence}}.  Using this quantity, the \textit{min-entropy of coherence} $C_{\min}(\rho)$ can be defined as
\begin{equation}
\label{cmin}
\begin{aligned}
	C_{\min}(\rho) &= \min_{\delta\in\mathcal{I}}D_{\min}(\rho\|\delta).
\end{aligned}
\end{equation}
The properties of $C_{\min}$ and its relationship between other coherence monotones are discussed in \cite{Liu18}.

The second quantifier of coherence relevant to our study is based on the hypothesis testing relative entropy \cite{Wang12,Dupuis-entropies}. In the task of hypothesis testing, a positive
operator valued measure (POVM) with two elements $\{W,\1-W\}$ is used to distinguish
two possible states $\rho$ and $\sigma$.
The probability of obtaining a correct guess on $\rho$ is $\mathrm{Tr}(\rho W)$ and the probability of obtaining a wrong guess on $\sigma$ is $\mathrm{Tr}(\sigma W)$. The hypothesis testing relative entropy characterizes the minimal wrong guessing probability on $\sigma$, with the constraint that the probability of the correct guess on $\rho$ is no less than $1-\varepsilon$:
\begin{equation}\begin{split}
  \label{dh}
  D_H^\varepsilon(\rho\|\sigma)
     = -\log \min \bigl\{\Tr \sigma \Qi{W} \,:\, 0 \leq \Qi{W} &\leq\1, \bigr. \\
                                   \bigl. \Tr \rho \Qi{W} &\geq 1-\varepsilon \bigr\}.
\end{split}
\end{equation}
This serves as a parent quantity for two other coherence functions.  Namely, we have
\begin{align}
C^\varepsilon_H(\rho)             &=\min\{ D^\varepsilon_H(\rho\|\delta)\,:\,\delta\in\mc{I} \}, \\
\widetilde{C}^\varepsilon_H(\rho) &=\min\{ D^\varepsilon_H(\rho\|\sigma)\,:\,\sigma=\Delta(\sigma),\ \tr\sigma=1\}.
\end{align}
The function $\widetilde{C}^\varepsilon_H(\rho)$ characterizes the one-shot distillable distillable coherence under MIO and DIO.

In the many-copy scenario, the \yunchao{regularized coherence distillation rate}
by operations $\mathcal{O}\in\{\text{MIO},\,\text{DIO},\,\text{IO},\,\text{SIO}\}$ is defined as
\yunchao{
\begin{equation}\label{asymptoticdistill}
  C_{d,\mathcal{O}}^{\infty}(\rho)=\liminf_{\varepsilon\to 0^+}\liminf_{n\to\infty}\frac{1}{n}C_{d,\mathcal{O}}^\varepsilon(\rho^{\otimes n}).
\end{equation}
}
For MIO, DIO, and IO, the distillable coherence of $\rho$ is given by the relative entropy of coherence, \yunchao{which we call the asymptotic coherence distillation rate~\cite{Winter16}, where
\begin{equation}\label{MIODIOIOdistill}
    \lim_{\varepsilon\to 0^+}\lim_{n\to\infty}\frac{1}{n}C_{d,MIO/DIO/IO}^\varepsilon\left(\rho^{\otimes n}\right)=C_r(\rho).
\end{equation}
Combining Eq.~\eqref{MIODIOdilution}, \eqref{MIODIOIOdistill}, we can conclude that coherence is strongly converse under MIO/DIO.}  In contrast, as we show below, certain coherent states are undistillable under SIO.  A summary of the distillation/formation rates for different operations is given in Table~\ref{Table:results}.

\section{One-shot distillation \protect\\ under MIO and DIO}
\label{sec:dios-mio}
In this section, we review the results of \cite{Regula-one-shot}
on MIO and DIO distillation.

For a quantum channel
$\Lambda:\mathcal{L}(A)\to\mathcal{L}(B)$,
consider its Choi operator
\begin{equation}
  \label{Choi}
  \Gamma = (\id\ox\Lambda)\Phi
         = \sum_{i,j}\ketbra{i}{j}_R\otimes\Lambda(\ketbra{i}{j}_A),
\end{equation}
where $R$ is isomorphic to $A$. Let $\Gamma_{ij}=\Lambda(\ketbra{i}{j}_A)$.
By Choi's theorem, $\Lambda$ is completely positive if and only if
$\Gamma\geq 0$. Also, $\Lambda$ is trace preserving if and only if
$\Tr \Gamma_{ij} = \delta_{ij}$.

Now consider $\Lambda\in \text{MIO}$, meaning $\Lambda(\delta)\in\mathcal{I}$
for all $\delta\in\mathcal{I}$, which is equivalent to $\Lambda(\proj{i})\in\mathcal{I}$
for all $i$. For the Choi matrix this says $\Gamma_{ii}\in\mathcal{I}$ for all $i$.
For $\Lambda\in \text{DIO}$, from $\Lambda(\Delta(\ketbra{i}{j}))=\Delta(\Lambda(\ketbra{i}{j}))$ for all $i$ and $j$, we get $\delta_{ij}\Gamma_{ij}=\Delta(\Gamma_{ij})$.

Since $\Lambda(\rho)=\sum_{i,j}\rho_{ij}\Gamma_{ij}$ (by linearity), we have now,
\begin{equation}
  \label{fidelity}
  F(\Lambda(\rho),\Psi_M)^2 = \Tr [\Lambda(\rho)\Psi_M]
                          = \sum_{i,j}\rho_{ij}\Tr \Gamma_{ij}\Psi_M.
\end{equation}
The one-shot distillation under MIO/DIO can thus be expressed as
\begin{equation}\begin{split}
  \label{problem}
  \max M \text{ s.t. } \sum_{i,j} \rho_{ij }\Tr \Gamma_{ij}\Psi_M &\geq 1-\varepsilon, \\
                        \sum_{i,j}\ketbra{i}{j}\otimes\Gamma_{ij} &\geq 0,\\
                                                  \Tr \Gamma_{ij} &= \delta_{ij},\\
   &\!\!\!\!\!\!\!\!\!\!\!\!\!\!\!\!\!\!\!\!
    \begin{cases} \phantom{\delta_{ij}}\Gamma_{ii}\in\mathcal{I} & \text{[MIO]},\\
                  \delta_{ij}\Gamma_{ij}=\Delta(\Gamma_{ij})     & \text{[DIO]}.
   \end{cases}
\end{split}\end{equation}

Suppose that the $\Gamma_{ij}$ satisfy the above constraints.
Consider the twirling transformation
\begin{equation}
  \label{transform}
  \widetilde{\Gamma}_{ij} = \frac{1}{M!}\sum_{\pi \in S_M}U_\pi\Gamma_{ij}U_\pi^\dag,
\end{equation}
where $S_M$ denotes the permutation group on $M$ objects,
and $U_\pi$ is the unitary representation of $\pi$ as a permutation matrix.
It is easy to see that then the $\widetilde{\Gamma}_{ij}$ also satisfy the constraints,
and lead to the same objective function, due to the permutation
invariance of $\Psi_M$.
Therefore, without loss of generality, we can assume that each of the
$\Gamma_{ij}$ is permutation-invariant and has hence the form
\begin{equation}\label{gammaform}
  \Gamma_{ij}=\alpha_{ij}\Psi_M+\beta_{ij}\frac{\1-\Psi_M}{M-1}.
\end{equation}
Furthermore, $\Tr \Gamma_{ij} = \alpha_{ij}+\beta_{ij}$.
We can see that both MIO and DIO lead to the following constraints:
When $i\neq j$, we have $\beta_{ij}=-\alpha_{ij}$ and when $i=j$,
we have $\alpha_{ii}=\frac{1}{M}$, $\beta_{ii}=1-\frac{1}{M}$.
For the matrices $A=(\alpha_{ij})_{i,j}$ and $B=(\beta_{ij})_{i,j}$, this
translates into the simple relation $B=\1-A$.
Notice that
\begin{equation}
  \label{gammapositive}
  \Gamma = \sum_{i,j}\alpha_{ij}\ketbra{i}{j}\otimes\Psi_M
            + \sum_{i,j}\beta_{ij}\ketbra{i}{j}\otimes\frac{\1-\Psi_M}{M-1},
\end{equation}
and we must satisfy $\bra{\psi}\Gamma\ket{\psi}\geq 0$ for all $\ket{\psi}$.
Consider the special case $\ket{\psi}=\ket{u}\otimes\ket{v}$;
by choosing $\ket{v}=\ket{\Psi_M}$, we learn that $A\geq 0$.
By choosing $\ket{v}$ orthogonal with $\ket{\Psi_M}$, we see likewise
that $B\geq 0$. Conversely, $A,B\geq 0$ implies that $\Gamma\geq 0$.

In conclusion, one-shot distillation under MIO/DIO can be solved
by the following SDP:
$C_{d,MIO/DIO}^\varepsilon(\rho)=\log M_{\text{opt}}$, with
\begin{equation}\begin{split}
  \label{sdp}
  M_{\text{opt}} = \max M \text{ s.t. } \Tr \rho^T A &\geq 1-\varepsilon, \\
                                             0\leq A &\leq \1,\ A_{ii}=\frac{1}{M}\,\forall i.
\end{split}\end{equation}

To express this result in terms of a suitable relative-entropy distance,
we recall the \emph{hypothesis testing relative entropy} introduced above \cite{Wang12,Dupuis-entropies},
\begin{equation}\begin{split}
  D_H^\varepsilon(\rho\|\sigma)
     = -\log \min \bigl\{\Tr \sigma W \,:\, 0 \leq W &\leq\1, \bigr. \\
                                   \bigl. \Tr \rho W &\geq 1-\varepsilon \bigr\},
\end{split}\end{equation}
and the corresponding coherence measure
\begin{equation}
  \label{dhcoherence}
  C_H^\varepsilon(\rho) = \min_{\delta\in\mathcal{I}} D_H^\varepsilon(\rho\|\delta).
\end{equation}

We show that the coherence measure $C_H^\varepsilon(\rho)$ can be computed by SDP.
Notice that
\begin{equation}\begin{split}
  \label{maxmin}
  C_H^\varepsilon(\rho) &= -\log\max_{\delta\in\mathcal{I}}
                             \min_{\substack{W:0\leq W\leq\1 \\ \Tr \rho W \geq 1-\varepsilon}}
                              \Tr \delta W \\
                        &= -\log\min_{\substack{W:0\leq W\leq\1 \\ \Tr \rho W \geq 1-\varepsilon}}
                             \max_{\delta\in\mathcal{I}}
                              \Tr \delta W \\
                        &= -\log\min_{\substack{W:0\leq W\leq\1 \\ \Tr \rho W \geq 1-\varepsilon}}
                             \max_i W_{ii},
\end{split}\end{equation}
where in the second line we have appealed to the minimax theorem \cite{Sion},
noting that the objective function, $\tr\delta W$, is linear in each
of the two arguments, and that the domains of optimization are closed convex sets.

\yunchao{The third line in Eq.~\eqref{maxmin}} can now be expressed as an SDP as follows:
\begin{equation}\begin{split}
  \label{dhsdp}
  \mu_{\text{opt}}=\max \mu \text{ s.t. } \Tr \rho W &\geq 1-\varepsilon, \\
                            0\leq W &\leq \1,\ W_{ii} \leq \frac{1}{\mu}\,\forall i,
\end{split}\end{equation}
\yunchao{where $C_H^\varepsilon(\rho)=\log\mu_{\text{opt}}$}.
Notice that in Eq. (\ref{sdp}) we have a transpose on $\rho$,
while in Eq. (\ref{dhsdp}) we don't. We can simply change $A$ into $A^T$
in the SDP (\ref{sdp}) without changing its value, so that the two SDP has similar
form, except that one has an equality sign where the other other
has a ``$\leq$''. In \cite{Regula-one-shot}, it was shown that
the r.h.s.~of Eq. (\ref{sdp}) can be expressed in terms of $D_H^\varepsilon$,
as well:
\yunchao{
\begin{equation}
\label{Eq:Hypothesis-DIO}
\begin{split}
  C_{d,MIO/DIO}^\varepsilon(\rho)=\max \log M \\
  \text{ s.t. } \Tr \rho^T A &\geq 1-\varepsilon, \\
                            0\leq A &\leq \1,\ A_{ii}=\frac{1}{M}\,\forall i \\
&\!\!\!\!\!\!\!\!\!\!\!\!\!\!\!\!\!\!\!\!\!\!\!\!\!\!\!\!\!\!\!\!\!\!\!
    =  \min D_H^\varepsilon(\rho\|\delta)\\&\!\!\!\!\!\!\!\!\!\!\!\!\!\!\!\!\!\!\!\!\!\!\!\!\!\!\!\!\!\!\! \text{ s.t. }\delta\text{ diagonal and }\tr\delta=1 \\
&\!\!\!\!\!\!\!\!\!\!\!\!\!\!\!\!\!\!\!\!\!\!\!\!\!\!\!\!\!\!\!\!\!\!\!
    =: \widetilde{C}_H^\varepsilon(\rho),
\end{split}
\end{equation}
} where the minimization is crucially over Hermitian, but not \Qi{necessarily positive semidefinite} matrices $\delta$.

We record these findings in the following theorem.

\begin{theorem}[Regula \emph{et al.}~\cite{Regula-one-shot}]
\label{thm:miodio}
For any state $\rho$, the one-shot MIO- and DIO-distillable coherence
is given by
\begin{equation}
  \label{boundmio}
  C_{d,\text{DIO}}^\varepsilon(\rho)
     =    C_{d,\text{MIO}}^\varepsilon(\rho)
     =    \widetilde{C}_H^\varepsilon(\rho)
     \leq C_H^\varepsilon(\rho).
\end{equation}
\end{theorem}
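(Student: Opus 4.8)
The plan is to assemble the pieces developed in the paragraphs above into the three asserted relations, following \cite{Regula-one-shot}. The first task is the common SDP characterisation $C_{d,\text{MIO}}^\varepsilon(\rho)=C_{d,\text{DIO}}^\varepsilon(\rho)=\log M_{\text{opt}}$, with $M_{\text{opt}}$ the optimum of~\eqref{sdp}. Starting from the Choi reformulation~\eqref{problem}, I would observe that the twirl~\eqref{transform} sends any feasible Choi matrix to a feasible one with the same objective value: positivity, the trace conditions $\Tr\Gamma_{ij}=\delta_{ij}$, and the MIO condition $\Gamma_{ii}\in\mathcal{I}$ are all preserved under conjugation by permutation matrices and averaging, while the objective is unchanged because $\Psi_M$ is permutation invariant. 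Hence one may assume each $\Gamma_{ij}$ has the two-parameter form~\eqref{gammaform}, and then feasibility for MIO and for DIO reduces to the \emph{same} set of conditions on the scalars $(\alpha_{ij},\beta_{ij})$ — namely $\alpha_{ii}=1/M$, $\beta_{ii}=1-1/M$, and $\beta_{ij}=-\alpha_{ij}$ for $i\neq j$; here one checks that, for $\Gamma_{ij}$ of the form~\eqref{gammaform}, the MIO constraint $\Gamma_{ii}\in\mathcal{I}$ and the DIO constraint $\delta_{ij}\Gamma_{ij}=\Delta(\Gamma_{ij})$ are equivalent. With positivity of $\Gamma$ equivalent to $0\le A\le\1$, this produces exactly~\eqref{sdp}, and in particular $C_{d,\text{MIO}}^\varepsilon=C_{d,\text{DIO}}^\varepsilon$.

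The second task is to recognise $\log M_{\text{opt}}$ as $\widetilde{C}_H^\varepsilon(\rho)$. Replacing $A$ by $A^{T}$ (which affects neither the diagonal constraint nor the objective), \eqref{sdp} reads: maximise $M$ over $W$ with $0\le W\le\1$, $\Tr\rho W\ge 1-\varepsilon$, and $W_{ii}=1/M$ for every $i$. I would then rerun the minimax computation of~\eqref{maxmin}, but with the simplex of incoherent diagonal states replaced by the affine set $\{\sigma:\sigma=\Delta(\sigma),\ \Tr\sigma=1\}$ of Hermitian diagonal unit-trace matrices: the equality $W_{ii}=1/M$ translates into $1/M_{\text{opt}}=\sup_{\sigma}\min_{W}\Tr\sigma W$ over that affine set, the inner minimum being finite (it is at most $1$, attained at $W=\1$) and positive at the optimal $\sigma$. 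Taking $-\log$ gives $\log M_{\text{opt}}=\min_{\sigma}D_H^\varepsilon(\rho\|\sigma)=\widetilde{C}_H^\varepsilon(\rho)$, which is exactly the identity~\eqref{Eq:Hypothesis-DIO} established in \cite{Regula-one-shot}.

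The third task, the inequality $\widetilde{C}_H^\varepsilon(\rho)\le C_H^\varepsilon(\rho)$, is immediate: both minimise the same functional $D_H^\varepsilon(\rho\|\cdot)$, but $\widetilde{C}_H^\varepsilon$ does so over a strictly larger feasible set (all Hermitian diagonal unit-trace matrices rather than only the positive semidefinite ones, i.e.\ the incoherent states), so the minimum can only decrease. Equivalently, at the level of the programmes, any $W$ feasible for~\eqref{sdp} with value $M$ is feasible for~\eqref{dhsdp} with $\mu=M$, whence $M_{\text{opt}}\le\mu_{\text{opt}}$.

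I expect the delicate point to be the second task: relaxing the equality $W_{ii}=1/M$ to the inequality $W_{ii}\le 1/\mu$ is \emph{not} harmless in general — that gap is precisely what allows a strict inequality $\widetilde{C}_H^\varepsilon<C_H^\varepsilon$ — so the match between the equality-constrained SDP and the relative entropy taken over Hermitian (possibly non-positive) ``states'' has to be justified carefully via SDP duality, together with a Slater-point check to guarantee strong duality (there is interior slack when $\varepsilon>0$, and the $\varepsilon=0$ case follows by a continuity/limiting argument). A minor additional nuisance is that the SDP optimum $M_{\text{opt}}$ need not be an integer whereas $\Psi_M$ requires integral $M$; this rounding issue is dealt with in \cite{Regula-one-shot}, and I would simply invoke it.
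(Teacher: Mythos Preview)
Your proposal is correct and follows essentially the same approach as the paper: the twirling reduction of the Choi matrix to the two-parameter form~\eqref{gammaform}, the observation that MIO and DIO collapse to the same scalar constraints (with $\beta_{ij}=-\alpha_{ij}$ coming from trace preservation and the DIO off-diagonal condition being automatically satisfied), and the resulting SDP~\eqref{sdp}, are all exactly as in the text preceding the theorem. The only difference is that the paper does not prove the identification $\log M_{\text{opt}}=\widetilde{C}_H^\varepsilon(\rho)$ itself but simply records it as the result~\eqref{Eq:Hypothesis-DIO} of~\cite{Regula-one-shot}; you go further and sketch the SDP-duality argument behind it, correctly flagging that the naive minimax of~\eqref{maxmin} cannot be rerun verbatim over the non-compact affine set of diagonal unit-trace Hermitian matrices and that one must instead appeal to Lagrangian/SDP duality with a Slater point---this is indeed the right mechanism and the right caveat.
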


\section{One-shot distillation under IO}
\label{sec:des-pudels-kern:io}
Now we come to the main contribution of the present paper, the extension
of the one-shot distillation protocols to the class IO.

To start, recall the definition of the min-relative entropy and
its smoothed version,
\begin{equation}\label{Dmin}
  \begin{split}
    D_{\min}(\rho\|\sigma)     &= -\log F(\rho,\sigma)^2, \\
    D_{\min}^\varepsilon(\rho) &=  \max_{\rho'\in B^\varepsilon(\rho)} D_{\min}(\rho'\|\sigma).
  \end{split}
\end{equation}
The \emph{smooth min-relative entropy of coherence} is defined as
\begin{equation}
  \label{Cmin}
  C_{\min}^\varepsilon(\rho)
    = \max_{\rho'\in B^\varepsilon(\rho)} \min_{\delta\in\mathcal{I}} D_{\min}(\rho\|\delta).
\end{equation}
One might wonder why we do not interchange min and max here, but this
is the version of the quantity that appears naturally both in the achievability
bound we will derive on Subsection~\ref{subsec:achieve}, and in the
upper bound in Subsection~\ref{subsec:converse}.

\subsection{An achievable lower bound}
\label{subsec:achieve}
We will generalize the protocol in \cite[Thm.~6]{Winter16} to
obtain a min-relative-entropic lower bound on the distillable
coherence. In the process, the privacy amplification aspect will
become even more apparent.

\begin{theorem}
  \label{thm:achieve}
  For an arbitrary state $\rho$ and $0<\varepsilon<1$,
  \begin{equation}
    C^{\varepsilon}_{d,IO}(\rho) \geq C^{\frac{\varepsilon}{2}-\eta}_{\min}(\rho) - 2\log\frac{1}{\eta},
  \end{equation}
  for any $0< \eta < \frac{\varepsilon}{2}$.
\end{theorem}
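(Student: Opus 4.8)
The plan is to adapt the distillation protocol of Winter–Yang \cite[Thm.~6]{Winter16}, whose core is a privacy-amplification / decoupling argument, and to run it on a smoothed version $\rho'$ of $\rho$ rather than on $\rho$ itself. First I would fix $\rho' \in B^{\varepsilon/2-\eta}(\rho)$ achieving (or nearly achieving) the optimum in $C^{\varepsilon/2-\eta}_{\min}(\rho) = \max_{\rho'} \min_{\delta\in\mathcal{I}} D_{\min}(\rho'\|\delta)$, and let $\delta^\ast \in \mathcal{I}$ be the optimal incoherent state for $\rho'$, so that $F(\rho',\delta^\ast)^2 = 2^{-C^{\varepsilon/2-\eta}_{\min}(\rho)}$. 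Since $\delta^\ast$ is diagonal, it defines a probability distribution on the computational basis; the idea is that $\rho'$ being far (in $D_{\min}$) from \emph{every} incoherent state, in particular from $\delta^\ast$, means the off-diagonal "coherence" content of $\rho'$ is large, and this is exactly the raw material a privacy-amplification hash can extract into a maximally coherent state of dimension roughly $M \approx 2^{D_{\min}(\rho'\|\delta^\ast)} \cdot \eta^{2}$.

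The key steps, in order: (1) Purify $\rho'$ to $\ket{\psi}_{AE}$ and view the computational basis of $A$ as a classical register; the diagonal of $\rho'$ vs. its coherences plays the role of a classical–quantum state whose conditional min-entropy (relative to $\delta^\ast$-weighting) controls extraction. (2) Apply an IO map that implements a random hash function $f:[d]\to[M]$ on the basis labels — this is realized by Kraus operators of the form $K_\alpha = \sum_{x} \ketbra{\alpha}{x}\,[\,f(x)=\alpha\,]$-type partial isometries composed with the hashing, which is manifestly incoherent (indeed one should check it lies in DIIO $=$ IO $\cap$ DIO, to get the extra claim the authors advertise in the introduction). (3) Invoke a one-shot leftover-hash / decoupling bound (Renner-style, or the operator version used in \cite{Winter16}) to show that for a suitable choice of $M$ the output is within purified distance $\sqrt{\varepsilon/2}$ of $\Psi_M$ on average over hash functions, hence some fixed $f$ works; the loss from averaging-to-fixed and from the smoothing is absorbed into the $-2\log\frac1\eta$ term and the $\varepsilon/2 \to \varepsilon$ slack (via the triangle inequality for the purified distance, $P(\Lambda(\rho),\Psi_M) \le P(\Lambda(\rho),\Lambda(\rho')) + P(\Lambda(\rho'),\Psi_M) \le (\varepsilon/2-\eta) + \sqrt{\varepsilon/2} \le \sqrt\varepsilon$, using monotonicity of $P$ under channels — I'd need to track the exact constants here). (4) Convert the resulting bound $\log M \ge D_{\min}(\rho'\|\delta^\ast) - 2\log\frac1\eta$ into the claimed inequality by taking the max over $\rho'$ and the min over $\delta$.

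The main obstacle I anticipate is step (3): making the leftover-hash bound interface cleanly with the \emph{min}-relative entropy $D_{\min}(\rho\|\delta) = -\log F(\rho,\delta)^2$ rather than with a conditional min-entropy $H_{\min}$, and getting the additive overhead to come out as exactly $2\log\frac1\eta$ with no hidden dimension-dependent terms. In \cite{Winter16} the analogous statement is asymptotic, so the $\eta$-dependent penalty and the precise smoothing bookkeeping (the factor $\frac12$ in front of $\varepsilon$, the shift by $\eta$ inside the ball) are the genuinely new technical content; I would handle this by working with a purified-distance-continuous version of the fidelity-based quantity and a Chebyshev/Markov argument to pass from expected error over random $f$ to a single good $f$, which is precisely where one power of $\frac1\eta$ and hence the $2\log\frac1\eta$ enters. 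Verifying that the constructed Kraus operators satisfy the IO (and ideally DIIO) condition is routine but must be done explicitly, since the whole point of the theorem is that this protocol — unlike the MIO/DIO SDP — is implementable by genuinely incoherent operations.
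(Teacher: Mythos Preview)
Your high-level strategy---run a privacy-amplification/decoupling argument in the spirit of \cite[Thm.~6]{Winter16} and identify the extractable quantity with $C_{\min}$---is indeed the paper's. But the concrete IO map you propose in step~(2) does not work. The Kraus operators $K_\alpha = \sum_{x:f(x)=\alpha}\ketbra{\alpha}{x}$ do not even form a channel ($\sum_\alpha K_\alpha^\dagger K_\alpha \neq \1$ unless $f$ is injective), and any map that merely relabels basis elements via $x\mapsto f(x)$ produces a \emph{diagonal} output, hence has fidelity at most $1/M$ with $\Psi_M$. A hash function by itself destroys coherence; it cannot distil it.

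The missing idea is how to convert ``private randomness'' back into coherence. In the paper one first applies the \emph{isometry} $U:\ket{x}^A\mapsto\ket{G(x)}^K\ket{x}^A$, which copies the hash value to a new register while \emph{retaining} the original. Decoupling (Renner's leftover-hash bound) guarantees the reduced state on $KE$ is close to $\tau_K\otimes\sigma^E$; by Uhlmann's theorem there is an isometry $V:A\hookrightarrow LF$ taking the global pure state close to $\ket{\Phi}^{KL}\ket{\zeta}^{EF}$ with $\ket{\Phi}$ maximally entangled. A destructive Fourier-basis measurement on $L$ followed by a phase correction $Z^\alpha$ on $K$ then converts that approximate maximal entanglement into an approximate $\Psi_M$ on $K$. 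Although $V$ itself need not be incoherent, composing it with a destructive measurement and partial trace yields Kraus operators $M_{\alpha\beta}=(Z^\alpha_K\otimes\bra{\widehat\alpha}^L\bra{\beta}^F V)U$ that \emph{are} IO (indeed DIO, giving the DIIO corollary). Your proposal jumps from ``apply hash'' to ``output $\approx\Psi_M$'' without this bridge, which is the genuine technical content of the construction.

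A secondary point: rather than smoothing $\rho\to\rho'$ first and closing with a triangle inequality (your step~(3), where the arithmetic $(\tfrac{\varepsilon}{2}-\eta)+\sqrt{\varepsilon/2}\leq\sqrt{\varepsilon}$ does not hold in general and mixes purified-distance and fidelity conventions), the paper invokes Renner's smoothed privacy amplification directly on $\rho$, obtaining $\log M \leq H_{\min}^{\varepsilon/2-\eta}(A|E)_\omega - 2\log\frac1\eta$, and then proves $H_{\min}^\alpha(A|E)_\omega \geq C_{\min}^\alpha(\rho)$ by restricting the smoothing ball for the cq-state $\omega^{AB}$ to maximally correlated states. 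This identification of $C_{\min}$ with a conditional min-entropy of the dephased purification is what makes the $-2\log\frac1\eta$ and the $\tfrac{\varepsilon}{2}-\eta$ shift drop out exactly, with no ad hoc constant-chasing.
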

\begin{proof}
\yunchao{
In order to accomplish our proof, we need to introduce the conditional min/max entropy and their smoothed versions. For a bipartite quantum state $\rho^{AB}$, the min-entropy of $A$ conditioned on $B$ is defined as
\begin{equation}
    H_{\min}(A|B)_{\rho^{AB}}:=-\min_{\sigma^B}D_{\max}(\rho^{AB}\|\1^A\otimes\sigma^B),
\end{equation}
and the max-entropy of $A$ conditioned on $B$ is defined as
\begin{equation}
    H_{\max}(A|B)_{\rho^{AB}}:=-H_{\min}(A|C)_{\rho^{AC}},
\end{equation}
where $\rho^{ABC}$ is a purification of $\rho^{AB}$ and $\rho^{AC}=\Tr_{B}\rho^{ABC}$. It is proven that the max-entropy has an alternative form~\cite{Konig2009operational}
\begin{equation}\label{maxentropyalternative}
    H_{\max}(A|B)_{\rho^{AB}}:=-\min_{\sigma^B}D_{\min}(\rho^{AB}\|\1^A\otimes\sigma^B).
\end{equation}
Furthermore, the smoothed conditional min- and max-entropy are defined by
\begin{equation}
\begin{aligned}
    H_{\min}^{\varepsilon}(A|B)_{\rho^{AB}}&:=\max_{\rho'\in B^\varepsilon(\rho^{AB})}H_{\min}^{\varepsilon}(A|B)_{\rho'},\\
    H_{\max}^{\varepsilon}(A|B)_{\rho^{AB}}&:=\min_{\rho'\in B^\varepsilon(\rho^{AB})}H_{\max}^{\varepsilon}(A|B)_{\rho'}.
\end{aligned}
\end{equation}
In the following proof, we denote the system of interest as $A$, i.e. $\rho^A := \rho$, and the incoherent basis as $\{\ket{x}\}$. Denote $\psi:=\ket{\psi}\bra{\psi}$.}
Choose a purification of $\rho^A$,
\[
  \ket{\psi}^{AE} = \sum_x \sqrt{p_x}\ket{x}^A\ket{\psi_x}^E,
\]
where $\Tr_E\psi^{AE}=\rho^A$, and introduce the
dephased cq-state
\[
  \omega^{AE} = (\Delta\ox\id)\psi = \sum_x p_x \proj{x}^A \ox \psi_x^E.
\]

According to \cite[Cor.~5.6.1]{Renner-PhD} (actually, a slight adaptation
to get rid of a factor of $2$), for every
$\log M \leq H_{\min}^{\frac{\varepsilon}{2}-\eta}(A|E)_\omega - 2\log\frac{1}{\eta}$
there exists a function $G:\mathcal{X} \rightarrow [M]=\{1,2,\ldots,M\}$
such that the state
\[
  \Omega^{KE} := (G\ox\id)\omega^{AE}
               = \sum_x p_x \proj{G(x)}^K \ox \psi_x^E
\]
satisfies
\[
  \frac12 \left\|\Omega^{KE} - \tau_K \ox \sigma^E\right\|_1 \leq \frac{\varepsilon}{2},
\]
for $\tau^K = \frac{1}{M}\1_K$ and a suitable state $\sigma^E$
(which may be equal to $\omega^E$, but it doesn't have to be).
Hence, because of the well-known relation
between trace distance and fidelity \cite{Fuchs-vandeGraaf},
\[
  1-F(\rho,\sigma) \leq \frac12\|\rho-\sigma\|_1 \leq \sqrt{1-F(\rho,\sigma)^2},
\]
we have $F(\Omega^{KE},\tau_K\ox\sigma^E) \geq 1-\frac{\varepsilon}{2}$.

To turn this property, which encapsulates the uniformity and independence
of the ``key'' $K=G(X)$ from $E$, into a coherence distillation protocol,
consider the following purification of $\Omega^{KE}$:
\[
  \ket{\Omega}^{KAE} := \sum_x \sqrt{p_x} \ket{G(x)}^K \ket{x}^A \ket{\psi_x}^E,
\]
which can be obtained at from $\ket{\psi}^{AE}$ by applying the isometry
\[
  U:\ket{x}^A \longmapsto \ket{G(x)}^K\ket{x}^A.
\]
Crucially, $U$ is incoherent (even SIO).
For $\tau_K\ox\sigma^E$, on the other hand, we choose a purification
$\ket{\Phi}^{KL} \ox \ket{\zeta}^{EF}$, with the standard maximally
entangled state $\ket{\Phi} = \frac{1}{\sqrt{M}}\sum_{i=0}^{M-1}\ket{i}\ket{i}$.
By Uhlmann's theorem \cite{uhlmann1976transition},
there exists an isometry $V:A \hookrightarrow LF$ such that
\begin{equation}
  \label{eq:big-fidel}
  |\bra{\Phi}\bra{\zeta}\cdot(\1_{KE}\ox V)\ket{\Omega}| \geq 1-\frac{\varepsilon}{2}.
\end{equation}
If we had $\ket{\Phi}^{KL}\ket{\zeta}^{EF}$, we could clearly create
a maximally coherent state $\Psi_M$ on system $K$ by tracing out $F$,
and destructively measuring $L$ in the Fourier conjugate
basis $\ket{\widehat{\alpha}} := Z^\alpha\ket{\Psi_M}$. \yunchao{Here $Z$ is the phase unitary defined in Eq.~\eqref{phaseunitary}.}

With this, the protocol is clear: starting from $\rho^A$, first
apply $U$; then apply $V$, followed by tracing out $F$,
measuring $L$ in the Fourier basis $\{\ket{\widehat{\alpha}}\}_{\alpha = 0,\ldots,M-1}$;
finally, apply $Z^\alpha$ on $K$.
The first and the third step clearly are incoherent (even SIO); the second
seems suspicious, and indeed $V$ may not be incoherent at all, but notice that
we follow it by a destructive measurement, and these are all IO.
We can write Kraus operators of the resulting map $\Lambda:A\rightarrow K$
as follows:
\[
  M_{\alpha\beta} := \bigl(Z^\alpha_K\ox \bra{\widetilde\alpha}^L\bra{\beta}^F V\bigr) U,
\]
where $\{\ket{\beta}\}$ is an arbitrary basis of $F$.

To analyze the fidelity of the protocol, we pass to the purifications
and look at eq.~(\ref{eq:big-fidel}); by the monotonicity of the
fidelity under the CPTP maps $\Lambda$ and $\Tr_E$, we obtain
$F\bigl( \Lambda(\rho),\Psi_M\bigr) \geq 1-\frac{\varepsilon}{2}$ and
hence $F\bigl( \Lambda(\rho),\Psi_M\bigr)^2 \geq 1-\varepsilon$.

This shows that $\log M \approx H_{\min}^{\frac{\varepsilon}{2}-\eta}(A|E)_\omega - 2\log\frac{1}{\eta}$
is achievable. Now, introducing the purification
$\ket{\omega}^{ABE} = \sum_x \sqrt{p_x}\ket{x}^A\ket{x}^B\ket{\psi_x}$, we have
\begin{equation}
\begin{split}\label{eq:hmincmin}
  H_{\min}^\alpha(A|E)_\omega &= - H_{\max}^\alpha(A|B)_\omega \\
        &=    -\log \min_{\omega'\in B^\alpha(\omega)} \max_\sigma F(\omega',\1\ox\sigma)^2 \\
        &\!\!\!\!\!\!\!\!\!\!
         \geq -\log \min_{\substack{\omega'\in B^\alpha(\omega) \\ \text{max.\,correlated}}}
                      \max_\sigma F(\omega',\1\ox\sigma)^2 \\
        &\!\!\!\!\!\!\!\!\!\!
         =    -\log \min_{\substack{\omega'\in B^\alpha(\omega) \\ \text{max.\,correlated}}}
                      \max_{\sigma\in\mathcal{I}} F(\omega',\1\ox\sigma)^2 \\
        &\!\!\!\!\!\!\!\!\!\!
         =    -\log \min_{\rho'\in B^\alpha(\rho)} \max_{\delta\in\mathcal{I}} F(\rho',\delta)^2 \\
        &\!\!\!\!\!\!\!\!\!\!
         = \max_{\rho'\in B^\alpha(\rho)} \min_{\delta\in\mathcal{I}} D_{\min}(\rho'\|\delta) \\
        &\!\!\!\!\!\!\!\!\!\!
         = C_{\min}^\alpha(\rho).
\end{split}
\end{equation}
\yunchao{Here, the second line follows from Eq.~\eqref{maxentropyalternative}}, and the third line is motivated by the observation that
$\omega^{AB} = \sum_{xy} \sqrt{p_x p_y}\braket{\psi_y}{\psi_x} \ketbra{xx}{yy}$
is a maximally correlated state, so it is natural to impose the same
structure on $\omega'$; the fourth line follows from the $Z\ox Z^\dagger$-invariance
of maximally correlated states, so by the concavity of the fidelity we
can impose w.l.o.g.~the same structure on $\1\ox\sigma$, meaning that
$\sigma$ is diagonal. The rest is straightforward algebra.
\end{proof}

\medskip
Note that the equality in Eq.~(\ref{eq:hmincmin}) can also be achieved by
directly using the results by Coles \cite{Coles2012}, or \cite{Liu18}:
\begin{equation}
  H_{\min}(X|E)_{\rho_{XE}} = \min_{\delta\in\mathcal{I}} D_{\min}(\rho_A\|\delta).
\end{equation}

\begin{remark}
\label{rem:DIIO}
The CPTP map $\Lambda(\cdot)=\sum_{\alpha,\beta}M_{\alpha\beta}(\cdot)M_{\alpha\beta}^\dagger$
we constructed in the proof is not only IO but also DIO. To see this, first expand
\begin{align}
\Lambda(\op{x}{y})=\sum_{\alpha,\beta}Z^\alpha\op{G(x)}{G(y)}Z^{\alpha\dagger}\bra{\widetilde{\alpha}\beta}V\op{x}{y}V^\dagger\ket{\widetilde{\alpha}\beta}\notag
\end{align}
for any incoherent basis states $\ket{x}$ and $\ket{y}$ of system $A$.  The key observation is that fully dephased states are \Qi{invariant} under conjugation by $Z^\alpha$, and moreover this conjugation commutes with $\Delta$, i.e. $\Delta[Z^\alpha(\cdot)Z^{\alpha\dagger}]=\Delta(\cdot)$.  Hence if we dephase system $K$ after applying the map $\Lambda$, we find
\begin{align}
\Delta[\Lambda(\op{x}{y})]&=\delta_{G(x)G(y)}\op{G(x)}{G(y)}\notag\\
&\quad\times\sum_{\alpha,\beta}\bra{\widetilde{\alpha}\beta}V\op{x}{y}V^\dagger\ket{\widetilde{\alpha}\beta}\notag\\
&=\delta_{xy}\op{G(x)}{G(y)}=\Lambda[\Delta(\op{x}{y})],
\end{align}
where the second equality follows from the fact that $\{\ket{\widetilde{\alpha}\beta}\}_{\alpha,\beta}$ forms a complete basis.
Thus, $\Lambda$ is a DIO map.
\end{remark}

\begin{corollary}
  \label{thm:diioachieve}
  For an arbitrary state $\rho$ and $0<\varepsilon<1$,
  \begin{equation}
    C^{\varepsilon}_{d,DIIO}(\rho) \geq C^{\frac{\varepsilon}{2}-\eta}_{\min}(\rho) - 2\log\frac{1}{\eta},
  \end{equation}
  for any $0< \eta < \frac{\varepsilon}{2}$,
  where DIIO refers to the intersection of IO and DIO.
  \hfill$\blacksquare$
\end{corollary}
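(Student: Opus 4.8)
The plan is to observe that Corollary~\ref{thm:diioachieve} is already essentially contained in the proof of Theorem~\ref{thm:achieve}: no new distillation argument is needed, one only has to check \emph{where} the distilling channel lives. First I would re-examine the channel $\Lambda:A\to K$ built in the achievability proof, with Kraus operators $M_{\alpha\beta} = \bigl(Z^\alpha_K\ox \bra{\widetilde\alpha}^L\bra{\beta}^F V\bigr)U$. We already showed there that $F(\Lambda(\rho),\Psi_M)^2 \geq 1-\varepsilon$ whenever $\log M \leq H_{\min}^{\frac{\varepsilon}{2}-\eta}(A|E)_\omega - 2\log\frac{1}{\eta}$, and that $\Lambda$ is IO (the only potentially non-incoherent piece, the Uhlmann isometry $V$, is immediately followed by a destructive measurement, and destructive measurements are IO). Remark~\ref{rem:DIIO} supplies the missing half: the very same $\Lambda$ commutes with the dephasing map, $\Delta\circ\Lambda = \Lambda\circ\Delta$, so $\Lambda\in\text{DIO}$ as well, hence $\Lambda\in\text{DIIO} = \text{IO}\cap\text{DIO}$.

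Next I would feed this stronger membership statement into the definition \eqref{Eq:distillationIO} of the one-shot distillable coherence. Since $C^\varepsilon_{d,DIIO}(\rho)$ is the supremum of $\log M$ over channels in DIIO achieving output fidelity $1-\varepsilon$ with $\Psi_M$, and the channel $\Lambda$ above is in DIIO, we obtain
\[
  C^\varepsilon_{d,DIIO}(\rho) \geq H_{\min}^{\frac{\varepsilon}{2}-\eta}(A|E)_\omega - 2\log\frac{1}{\eta}.
\]
Finally, invoking the chain of (in)equalities in Eq.~\eqref{eq:hmincmin}, which gives $H_{\min}^{\alpha}(A|E)_\omega \geq C_{\min}^\alpha(\rho)$ for $\alpha = \frac{\varepsilon}{2}-\eta$, yields the claimed bound $C^\varepsilon_{d,DIIO}(\rho) \geq C^{\frac{\varepsilon}{2}-\eta}_{\min}(\rho) - 2\log\frac{1}{\eta}$.

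There is no genuine obstacle; all the work has been done in Theorem~\ref{thm:achieve} and Remark~\ref{rem:DIIO}. The only points deserving a sentence of care are that DIIO is a bona fide operation class for which the distillation task of Eq.~\eqref{Eq:distillationIO} makes sense, and that the explicit $\Lambda$ is a legitimate CPTP map inside it — both immediate, the latter because $\Lambda$ is the composition of the incoherent isometry $U$, the isometry $V$, a partial trace over $F$, a destructive Fourier-basis measurement on $L$, and the incoherent corrections $Z^\alpha$ on $K$. It is worth noting, as a by-product, that together with the trivial inclusions $\text{DIIO}\subseteq\text{IO}$ and $\text{DIIO}\subseteq\text{DIO}$ and with Theorem~\ref{thm:miodio}, this corollary shows that the one-shot distillable coherence under IO, DIO and DIIO all coincide up to the smoothing parameter and the universal additive term $2\log\frac{1}{\eta}$.
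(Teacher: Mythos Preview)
Your proposal is correct and matches the paper's approach exactly: the corollary is stated without proof in the paper (note the terminal $\blacksquare$), precisely because it follows immediately from combining the construction in Theorem~\ref{thm:achieve} with Remark~\ref{rem:DIIO}, which is just what you spell out.
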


\subsection{Upper bound and comparison with MIO distillation}
\label{subsec:converse}
We have a partial converse theorem to Theorem~\ref{thm:achieve} which
can bound $C^\varepsilon_{d,IO}(\rho)$ from both sides, as follows.

\begin{theorem}
  \label{thm:upper}
  For an arbitrary state $\rho$ and $0<\varepsilon<1$,
  \begin{equation}
    C^\varepsilon_{d,IO}(\rho) \le C_{\min}^{\sqrt{\varepsilon(2-\varepsilon)}}(\rho).
  \end{equation}
 \end{theorem}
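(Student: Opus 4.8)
\medskip
\noindent\emph{Proof strategy.}
The plan is to turn any successful IO distillation protocol into a single ``test'' operator $P$ on the input system which is close to $\rho$ but far from every incoherent state, and to read the bound off it. First I would fix an \emph{optimal} $\Lambda\in\mathrm{IO}$, with Kraus operators $\{K_\alpha\}$, so that $\log M=C^\varepsilon_{d,IO}(\rho)$ and $F(\Lambda(\rho),\Psi_M)^2=\Tr[\Psi_M\,\Lambda(\rho)]\ge 1-\varepsilon$, and introduce the pulled‑back effect $P:=\Lambda^\dagger(\Psi_M)=\sum_\alpha K_\alpha^\dagger\Psi_M K_\alpha$. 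It has three properties, each immediate: $0\le P\le\1$ (because $\Psi_M\le\1$ and $\sum_\alpha K_\alpha^\dagger K_\alpha=\1$); $\Tr[\rho P]=\langle\Psi_M|\Lambda(\rho)|\Psi_M\rangle\ge 1-\varepsilon$; and, for every $\delta\in\mathcal{I}$, $\Tr[\delta P]=\langle\Psi_M|\Lambda(\delta)|\Psi_M\rangle\le\frac1M$, since $\Lambda(\delta)$ is diagonal (IO preserves $\mathcal{I}$) and hence $\langle\Psi_M|\Lambda(\delta)|\Psi_M\rangle=\frac1M\sum_{y=1}^M\langle y|\Lambda(\delta)|y\rangle\le\frac1M$. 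Only the fact that $\Lambda$ maps $\mathcal{I}$ into itself is used here, so the same argument also bounds MIO distillation.

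The second step is to take as the smoothed state the \emph{sub‑normalised} operator $\tilde\rho:=\sqrt P\,\rho\,\sqrt P$, with $\Tr\tilde\rho=\Tr[\rho P]\ge 1-\varepsilon$, and to check that $\tilde\rho\in B^{\sqrt{\varepsilon(2-\varepsilon)}}(\rho)$. Using $\sqrt\rho\sqrt P\rho\sqrt P\sqrt\rho=(\sqrt\rho\sqrt P\sqrt\rho)^2$ and the positivity of $\sqrt\rho\sqrt P\sqrt\rho$ one obtains $F(\rho,\tilde\rho)=\Tr[\sqrt\rho\sqrt P\sqrt\rho]=\Tr[\rho\sqrt P]\ge\Tr[\rho P]\ge 1-\varepsilon$, where the last inequality uses $\sqrt P\ge P$ for $0\le P\le\1$; since $\rho$ is normalised the generalised fidelity coincides with $F(\rho,\tilde\rho)$, so
\[
  P(\rho,\tilde\rho)=\sqrt{1-F(\rho,\tilde\rho)^2}\ \le\ \sqrt{1-(1-\varepsilon)^2}\ =\ \sqrt{\varepsilon(2-\varepsilon)}.
\]
(The distinction between the open and closed ball is immaterial and dealt with by the usual continuity argument, and one takes $\Lambda$ exactly optimal or $\varepsilon$‑optimal without changing anything.)

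The third step is to show that $\tilde\rho$ is far from $\mathcal{I}$: since $\sqrt{\tilde\rho}=|\sqrt\rho\sqrt P|$ and $\|\,|Y|\,Z\|_1=\|YZ\|_1$, for every $\delta\in\mathcal{I}$
\[
  F(\tilde\rho,\delta)=\big\|\sqrt{\tilde\rho}\,\sqrt\delta\big\|_1=\big\|\sqrt\rho\,\sqrt P\,\sqrt\delta\big\|_1
     \ \le\ \|\sqrt\rho\|_2\,\|\sqrt P\,\sqrt\delta\|_2=\sqrt{\Tr\rho}\,\sqrt{\Tr[P\delta]}\ \le\ \frac1{\sqrt M},
\]
so $D_{\min}(\tilde\rho\|\delta)=-\log F(\tilde\rho,\delta)^2\ge\log M$. (Equivalently, one can invoke the variational bound $F(\tilde\rho,\delta)^2\le\Tr[\tilde\rho X]\,\Tr[\delta X^{-1}]$ with $X$ a regularised inverse of $P$, for which $\Tr[\tilde\rho X]\le\Tr\rho=1$ and $\Tr[\delta X^{-1}]\le\frac1M$.) Combining the last two steps yields $C^{\sqrt{\varepsilon(2-\varepsilon)}}_{\min}(\rho)\ge\min_{\delta\in\mathcal{I}}D_{\min}(\tilde\rho\|\delta)\ge\log M=C^\varepsilon_{d,IO}(\rho)$, which is the claim.

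I expect the only genuine choice to be the identification of the test operator $P=\Lambda^\dagger(\Psi_M)$ and the test state $\tilde\rho=\sqrt P\rho\sqrt P$; everything after that is bookkeeping. The main subtlety — and the reason the radius comes out as exactly $\sqrt{\varepsilon(2-\varepsilon)}$ with no additive term — is twofold: the smoothing ball must be taken over sub‑normalised states (renormalising $\tilde\rho$ to trace one would introduce a spurious $\log\frac1{1-\varepsilon}$ and destroy the clean bound), and one has to track the fidelity $F(\rho,\tilde\rho)\ge\Tr[\rho P]\ge 1-\varepsilon$ rather than its square, so that the squared‑fidelity success criterion $F(\Lambda(\rho),\Psi_M)^2\ge1-\varepsilon$ of the distillation task is upgraded ``for free'' by the operator‑monotone map $P\mapsto\sqrt P$.
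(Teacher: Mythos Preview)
Your proof is correct. It shares the decisive construction with the paper---define the smoothed state as $\tilde\rho=\sqrt{P}\rho\sqrt{P}$ for a test operator $P$, bound $P(\rho,\tilde\rho)$ via $F(\rho,\tilde\rho)\ge\Tr[\rho P]$, and bound $F(\tilde\rho,\delta)^2\le\Tr[\delta P]$---but packages it more directly. The paper first passes through the hypothesis-testing coherence, using the chain $C_{d,IO}^\varepsilon\le C_{d,MIO}^\varepsilon=\widetilde C_H^\varepsilon\le C_H^\varepsilon$ from Theorem~\ref{thm:miodio}, and then proves a stand-alone Lemma~\ref{lemma:mindh-vs-minmin} comparing $\min_\sigma D_H^\varepsilon$ to $\min_\sigma D_{\min}$ for an arbitrary convex set $\mathcal S$; that lemma needs Sion's minimax theorem to produce a single $W_0$ working uniformly over $\sigma\in\mathcal S$, and the purification-based dual variational formula for the fidelity to obtain $F(\rho',\sigma)^2\le\Tr[\sigma W_0]$. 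You bypass both: taking $P=\Lambda^\dagger(\Psi_M)$ gives the uniform bound $\Tr[\delta P]\le 1/M$ for all $\delta\in\mathcal I$ directly from the MIO property of $\Lambda$ (no minimax needed), and the H\"older step $\|\sqrt\rho\,\sqrt P\,\sqrt\delta\|_1\le\|\sqrt\rho\|_2\,\|\sqrt P\sqrt\delta\|_2$ replaces the dual fidelity formula. The paper's detour buys a reusable lemma valid for any convex set of states in place of $\mathcal I$; your route buys self-containedness and avoids both the SDP analysis of Section~\ref{sec:dios-mio} and the external results cited from \cite{Dupuis-entropies}.
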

\begin{proof}
Due to the inclusion of the classes of operations, and
Theorems~\ref{thm:miodio} and~\ref{thm:achieve}, we have the first
four of the following (in)equalities:
\begin{equation}\begin{split}
  \label{summary}
  C_{\min}^{\frac{\varepsilon}{2}-\eta}(\rho)-2\log\frac{1}{\eta}
     &\leq C_{d,\text{IO}}^{\varepsilon}(\rho)          \\
     &\leq C_{d,\text{M/D\,IO}}^\varepsilon(\rho)
           = \widetilde{C}_H^\varepsilon(\rho)          \\
     &\leq C_H^\varepsilon(\rho)                      \\
     &\leq C_{\min}^{\sqrt{\varepsilon(2-\varepsilon)}}(\rho).
\end{split}\end{equation}

The last one also follows essentially from known facts, namely \cite[Prop.~4.2]{Dupuis-entropies}.
Note only that our definition of $D_H^\varepsilon$ differs from
\cite{Dupuis-entropies} by $\varepsilon \leftrightarrow 1-\varepsilon$, and
an additional term of $\log(1-\varepsilon)$ added. We made this choice for
easier comparison with the results from \cite{Regula-one-shot}.
With this in mind, \cite[Eq.~(51)]{Dupuis-entropies} reads
\[
  D_H^\varepsilon(\rho\|\sigma) \leq D_{\min}^{\sqrt{2\varepsilon}}(\rho\|\sigma),
\]
and looking at the last step of the proof, one observes
that $\sqrt{2\varepsilon}$ can be improved to $\sqrt{\varepsilon(2-\varepsilon)}$.
We can adapt the proof in \cite{Dupuis-entropies} to include the
minimization over $\delta\in\mathcal{I}$, according to the following Lemma~\ref{lemma:mindh-vs-minmin}. By applying it with $\mathcal{S}=\mathcal{I}$, and
maximizing over the $\sqrt{\varepsilon(2-\varepsilon)}$-ball on the
right hand side, we precisely obtain the last inequality in
Eq. \ref{summary}.
\end{proof}

\begin{remark}
\yunchao{Combining Theorem~\ref{thm:achieve} and Theorem~\ref{thm:upper}, we conclude that $C_{d,IO}^{\varepsilon}(\rho)\approx C_{\min}^{\varepsilon'}(\rho)$, with $\varepsilon'\in[\frac{\varepsilon}{2},\sqrt{(2-\varepsilon)}]$.}
\end{remark}

\begin{lemma}
  \label{lemma:mindh-vs-minmin}
  Let $\mathcal{S}$ be a closed convex set of states on a Hilbert space $A$,
  and $\rho$ a state. Then, for every $0<\varepsilon<1$ there exists a
  subnormalized density matrix $\rho'$ with
  $P(\rho,\rho') \leq \sqrt{\varepsilon(2-\varepsilon)}$, such that
  \[
    \min_{\sigma\in\mathcal{S}} D_H^\varepsilon(\rho\|\sigma)
        \leq \min_{\sigma\in\mathcal{S}} D_{\min}^{\sqrt{\varepsilon(2-\varepsilon)}}(\rho'\|\sigma).
  \]
\end{lemma}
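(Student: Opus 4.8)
The plan is to take the single-state estimate $D_H^{\varepsilon}(\rho\|\sigma)\le D_{\min}^{\sqrt{\varepsilon(2-\varepsilon)}}(\rho\|\sigma)$ of \cite[Prop.~4.2]{Dupuis-entropies} and build the minimisation over $\mathcal{S}$ in from the outset rather than applying it afterwards. The naive route --- for each $\sigma$ extract from the optimal hypothesis test a state $\rho'_{\sigma}$ near $\rho$ with $D_H^{\varepsilon}(\rho\|\sigma)\le D_{\min}(\rho'_{\sigma}\|\sigma)$, and then minimise over $\sigma$ --- fails, because $\rho'_{\sigma}$ depends on $\sigma$ and one cannot extract a single $\rho'$ that serves all $\sigma$ simultaneously. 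The remedy is to dualise first. Writing $D_H^{\varepsilon}(\rho\|\sigma)=-\log\min\{\Tr\sigma W:0\le W\le\1,\ \Tr\rho W\ge 1-\varepsilon\}$, one has
\[
  \min_{\sigma\in\mathcal{S}}D_H^{\varepsilon}(\rho\|\sigma)
     = -\log\,\max_{\sigma\in\mathcal{S}}\ \min_{W}\ \Tr\sigma W ,
\]
and since $\mathcal{S}$ and the test set $\{W:0\le W\le\1,\ \Tr\rho W\ge 1-\varepsilon\}$ are convex and compact and $(\sigma,W)\mapsto\Tr\sigma W$ is bilinear, the minimax theorem \cite{Sion} lets me interchange the two optimisations. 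This produces a \emph{single} test $W^{*}$ with $\Tr\rho W^{*}\ge 1-\varepsilon$ and $\max_{\sigma\in\mathcal{S}}\Tr\sigma W^{*}=2^{-c}$, where $c:=\min_{\sigma\in\mathcal{S}}D_H^{\varepsilon}(\rho\|\sigma)$.

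Next I would set $\rho':=\sqrt{W^{*}}\,\rho\,\sqrt{W^{*}}$, a subnormalised state with $\Tr\rho'=\Tr\rho W^{*}\ge 1-\varepsilon$, and establish two facts. \emph{(i) Closeness.} Fixing a purification $\ket{\phi}$ of $\rho$ on $AE$, the vector $(\sqrt{W^{*}}\ox\1_E)\ket{\phi}$ purifies $\rho'$, so by Uhlmann's theorem \cite{uhlmann1976transition} $F(\rho,\rho')\ge|\bra{\phi}(\sqrt{W^{*}}\ox\1_E)\ket{\phi}|=\Tr\rho\sqrt{W^{*}}\ge\Tr\rho W^{*}\ge 1-\varepsilon$, whence $P(\rho,\rho')\le\sqrt{1-F(\rho,\rho')^{2}}\le\sqrt{\varepsilon(2-\varepsilon)}$; this is the gentle-measurement estimate, sharpened from the $\sqrt{2\varepsilon}$ of \cite{Dupuis-entropies} exactly as in the proof of Theorem~\ref{thm:upper}. \emph{(ii) Fidelity bound, uniform in $\sigma$.} Using the polar decomposition $\sqrt{\rho}\,\sqrt{W^{*}}=U\sqrt{\rho'}$, equivalently $\sqrt{\rho'}=U^{\dagger}\sqrt{\rho}\,\sqrt{W^{*}}$ with $U$ a partial isometry, for every $\sigma\in\mathcal{S}$
\begin{align*}
  F(\rho',\sigma)=\|\sqrt{\rho'}\sqrt{\sigma}\|_1
    &=\|U^{\dagger}\sqrt{\rho}\,\sqrt{W^{*}}\sqrt{\sigma}\|_1 \\
    &\le\|\sqrt{\rho}\|_2\,\|\sqrt{W^{*}}\sqrt{\sigma}\|_2
       =\sqrt{\Tr\sigma W^{*}},
\end{align*}
where the inequality uses that $U^{\dagger}$ is a contraction together with H\"older's inequality $\|XY\|_1\le\|X\|_2\|Y\|_2$, and the last step uses $\Tr\rho=1$ and $\|\sqrt{W^{*}}\sqrt{\sigma}\|_2^{2}=\Tr\sigma W^{*}$. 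Hence $D_{\min}(\rho'\|\sigma)=-\log F(\rho',\sigma)^{2}\ge-\log\Tr\sigma W^{*}\ge c$ for \emph{every} $\sigma\in\mathcal{S}$, so $\min_{\sigma\in\mathcal{S}}D_{\min}(\rho'\|\sigma)\ge c=\min_{\sigma\in\mathcal{S}}D_H^{\varepsilon}(\rho\|\sigma)$, and since $D_{\min}^{\sqrt{\varepsilon(2-\varepsilon)}}(\rho'\|\sigma)\ge D_{\min}(\rho'\|\sigma)$ the asserted inequality follows (in fact in the stronger form without the extra smoothing on the right, which is what the proof of Theorem~\ref{thm:upper} actually exploits).

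I expect the only delicate part to be item (ii): verifying the polar-decomposition identity for $\sqrt{\rho'}$ with the factor $\sqrt{W^{*}}$ on the correct side, so that the H\"older split yields $\Tr\sigma W^{*}$ and not $\Tr\rho W^{*}$, and keeping the (sub)normalisation conventions for the generalised fidelity and purified distance straight since $\Tr\rho'<1$ in general. The conceptual point --- that the minimisation over $\mathcal{S}$ must be handled by dualising and invoking the minimax theorem \emph{before} constructing $\rho'$ rather than after --- is where the lemma genuinely differs from the known single-$\sigma$ statement, and I would stress it; checking the hypotheses of the minimax theorem (compactness and convexity of both domains, bilinearity of the objective) is routine but worth spelling out.
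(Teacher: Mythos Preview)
Your proof is correct and follows essentially the same route as the paper: invoke Sion's minimax theorem to extract a single optimal test $W^*$, set $\rho'=\sqrt{W^*}\rho\sqrt{W^*}$, and then verify closeness to $\rho$ together with the uniform fidelity bound $F(\rho',\sigma)^2\le\Tr\sigma W^*$. The only technical difference is in step (ii): the paper obtains the fidelity bound via the dual variational characterisation $F(\rho',\sigma)^2=\min\{\Tr\sigma Z:\varphi'\le Z\ox\1\}$ (observing that $\varphi'\le W^*\ox\1$ makes $Z=W^*$ feasible), whereas you use the polar decomposition $\sqrt{\rho}\sqrt{W^*}=U\sqrt{\rho'}$ followed by H\"older's inequality --- both arguments are standard and yield the identical estimate.
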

\begin{proof}
The crucial observation is that due to the convexity of the sets of
operators, $\mathcal{S}$ and $\{0\leq W\leq\1\,:\,\tr\rho W \geq 1-\varepsilon\}$,
we can invoke the minimax theorem~\cite{Sion}, to obtain
\[
  \max_{\sigma\in\mathcal{S}}
          \min_{\substack{0\leq W\leq\1 \\ \tr\rho W \geq 1-\varepsilon}} \tr\sigma W
  = \min_{\substack{0\leq W\leq\1 \\ \tr\rho W \geq 1-\varepsilon}}
          \max_{\sigma\in\mathcal{S}} \tr\sigma W.
\]
Thus, there exists an optimizer $W_0$ of the second expression,
$0\leq W_0\leq\1$, $\tr\rho W_0 \geq 1-\varepsilon$, with
\begin{equation}
  \label{eq:M0}
  \min_{\sigma\in\mathcal{S}} D_H^\varepsilon(\rho\|\sigma)
     = \min_{\sigma\in\mathcal{S}} -\log\tr\sigma W_0.
\end{equation}
Following the the example of \cite[Prop.~4.2]{Dupuis-entropies},
we define a subnormalized state $\rho' = \sqrt{W_0}\rho\sqrt{W_0}$,
which we claim to be the sought-after object.

To start with, from optimality of $W_0$, we have $\tr\rho' = \tr\rho W_0 = 1-\varepsilon$,
hence from \cite[Lemma~A.3]{Dupuis-entropies} (see also \cite[Lemma 7]{Berta:uncertainty}),
$P(\rho',\rho) \leq \sqrt{1-(\tr\rho W_0)^2} = \sqrt{\varepsilon(2-\varepsilon)}$.

At the same time, choosing a purification of $\rho^A = \tr_B \proj{\varphi}^{AB}$,
we get a purification of $\rho'$ by letting
$\ket{\varphi'} = (\sqrt{W_0}\ox\1)\ket{\varphi}$. Conjugating the
inequality $\varphi \leq \1$ by $\sqrt{W_0}\ox\1$ this results in
$\varphi' \leq W_0 \ox \1$. Now, just as in the proof of \cite[Prop.~4.2]{Dupuis-entropies},
we employ the dual variational characterization of the fidelity,
\[
  F(\rho',\sigma)^2 = \min\tr\sigma Z \text{ s.t. } \varphi'\leq Z\ox\1.
\]
This implies, that $\tr\sigma W_0 \geq F(\rho',\sigma)$ in \eqref{eq:M0},
for all $\sigma\in\mathcal{S}$, and so
$\min_{\sigma\in\mathcal{S}} D_H^\varepsilon(\rho\|\sigma)
     \leq \min_{\sigma\in\mathcal{S}} -\log F(\rho',\sigma)^2$, as claimed.
\end{proof}


\section{Distillation under SIO}
\label{sec:sio}

\subsection{Characterizing one-shot SIO distillation}


We now turn to coherence distillation under strictly incoherent operations
(SIO). Ever since \cite{Winter16}, it has been an open question whether
coherence distillation such as the protocol in Sec. \ref{subsec:achieve},
or in \cite[Thm.~6]{Winter16}, really requires IO, or can be performed
within the smaller class of SIO (as all other protocols discussed in \cite{Winter16}
can).
The crucial object in this setting turns out to be the incoherent rank.
Recall that the incoherent rank of a positive operator $\Omega$ is defined by
\begin{equation}
  C_0(\Omega)=\min_{\{\lambda_j,\ket{\phi_j}\}}\max_j \log\text{rank}[\Delta(\phi_j)],
\end{equation}
where the minimization is taken over all positive rank-one decompositions of $\Omega$.
\begin{theorem}
\label{th:SIO}
For any state $\rho$, the one-shot distillable coherence under SIO is given by
\begin{align}
C_{d,SIO}^\varepsilon(\rho)=\max \log M \text{ s.t. } \Tr \rho A &\geq 1-\varepsilon, \notag \\
                            0\leq A &\leq \1,\ A_{ii}=\frac{1}{M}\,\forall i\notag\\
C_0(A)&\leq \log M .
\end{align}
\end{theorem}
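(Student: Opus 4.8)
The plan is to characterize one-shot SIO distillation by following the same Choi-matrix route used for MIO/DIO in Section~\ref{sec:dios-mio}, but imposing the additional structural constraint that SIO Kraus operators come from permutation-scaled partial isometries, $K_\alpha=\sum_x c_{\alpha,x}\ketbra{f_\alpha(x)}{x}$ with $f_\alpha$ injective. I would first set up the distillation SDP as in Eq.~\eqref{problem}, with Choi blocks $\Gamma_{ij}=\Lambda(\ketbra{i}{j})$ satisfying $\Gamma\ge 0$ and $\Tr\Gamma_{ij}=\delta_{ij}$, and target fidelity $\sum_{ij}\rho_{ij}\Tr\Gamma_{ij}\Psi_M\ge 1-\varepsilon$. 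The twirling argument of Eq.~\eqref{transform} still applies (permuting the output basis of $\Psi_M$ is itself an SIO conjugation), so each $\Gamma_{ij}$ may be taken permutation-invariant of the form \eqref{gammaform}, giving again the reduction to a matrix $A=(\alpha_{ij})$ with $0\le A\le\1$ and $A_{ii}=1/M$. The objective becomes $\Tr\rho A\ge 1-\varepsilon$ (note: no transpose here, versus Eq.~\eqref{sdp}, since one can absorb it as remarked). So the only thing left to pin down is: which extra constraint on $A$ is exactly equivalent to the existence of an \emph{SIO} channel realizing a given permutation-symmetric Choi matrix.

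The key claim I would prove is that this extra constraint is precisely $C_0(A)\le\log M$, i.e. that $A$ admits a positive rank-one decomposition $A=\sum_j\lambda_j\proj{\phi_j}$ in which each $\ket{\phi_j}$ has at most $M$ nonzero coordinates in the incoherent basis (equivalently $\mathrm{rank}\,\Delta(\phi_j)\le M$). The forward direction: given an SIO channel $\Lambda$ with Kraus operators $K_\alpha=\sum_x c_{\alpha,x}\ketbra{f_\alpha(x)}{x}$, I would compute $\Gamma=(\mathrm{id}\otimes\Lambda)\Phi=\sum_\alpha\proj{v_\alpha}$ with $\ket{v_\alpha}=\sum_x c_{\alpha,x}\ket{x}^R\ket{f_\alpha(x)}^B$; twirling over $S_M$ on $B$ turns each $\proj{v_\alpha}$ into a mixture, but the point is to track how the ``$A$-part'' (the $\Psi_M$-component) of the twirled $\Gamma_{ij}$ is built from the $c_{\alpha,x}$'s, and exhibit it as a positive combination of rank-one terms each supported on $\le M$ basis vectors (because $f_\alpha$ being injective into $[M]$ means each $\alpha$ contributes a vector with support of size $\le M$ on the $R$-index). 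Conversely, given such a decomposition $A=\sum_j\lambda_j\proj{\phi_j}$ with $|\mathrm{supp}(\phi_j)|\le M$, I would \emph{build} an SIO channel: for each $j$ pick an injection $f_j$ from $\mathrm{supp}(\phi_j)$ into $[M]$ and set a Kraus operator proportional to $\sum_{x\in\mathrm{supp}(\phi_j)}(\phi_j)_x\ketbra{f_j(x)}{x}$, plus ``garbage'' Kraus operators (also of SIO form, e.g. $\ketbra{1}{x}$-type maps times scalars) to fix trace preservation on the diagonal and to account for the $\1-\Psi_M$ component; then verify its Choi matrix, after twirling, reproduces the required $(A,\1-A)$ data and hence the same fidelity.

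I would then assemble these into the stated SDP: $C_{d,SIO}^\varepsilon(\rho)$ equals $\max\log M$ over $A$ with $0\le A\le\1$, $\Tr\rho A\ge 1-\varepsilon$, $A_{ii}=1/M$, and $C_0(A)\le\log M$, since any feasible SIO protocol yields such an $A$ and any such $A$ yields a feasible SIO protocol. The main obstacle I anticipate is the bookkeeping in the converse construction: one must show that an \emph{arbitrary} positive rank-one decomposition of $A$ respecting the support bound can be completed to a genuine trace-preserving SIO map whose twirled Choi matrix has exactly the block structure \eqref{gammaform} with $B=\1-A$ — in particular handling the off-diagonal blocks $\Gamma_{ij}$ ($i\ne j$) with $\beta_{ij}=-\alpha_{ij}$, which forces cancellations that the SIO Kraus structure must be shown to accommodate. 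A secondary subtlety is checking that the injectivity of each $f_\alpha$ is the \emph{only} obstruction inherited from SIO (as opposed to, say, the scalars $c_{\alpha,x}$), so that no further constraint beyond $C_0(A)\le\log M$ appears; this should follow because the $c_{\alpha,x}$ are free complex numbers constrained only by $\sum_\alpha\sum_{x:f_\alpha(x)=y}|c_{\alpha,x}|^2=1$, which is exactly the freedom encoded in choosing the weights $\lambda_j$ and the components of $\ket{\phi_j}$.
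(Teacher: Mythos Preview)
Your Choi-matrix/twirling route differs from the paper's proof, which works directly with the adjoint channel: it sets $A:=\Lambda^*(\Psi_M)+\Lambda^*(\Pi_{>M})$ and reads off the required properties from the explicit SIO Kraus form, bypassing the Choi picture entirely. Your plan has a concrete gap concerning output dimension. The reduction to the form \eqref{gammaform} in Section~\ref{sec:dios-mio} relies on the output space being exactly $M$-dimensional, so that $S_M$-twirling leaves only the two-dimensional commutant spanned by $\Psi_M$ and $\1-\Psi_M$. But the injectivity of each $f_\alpha$ forces an SIO to have output dimension at least $d_A$; when $M<d_A$ (the interesting regime) you cannot obtain \eqref{gammaform}, and in particular $\alpha_{ii}=\Tr[\Lambda(\proj{i})\Psi_M]=\tfrac{1}{M}\sum_{\alpha:f_\alpha(i)\in[M]}|c_{\alpha,i}|^2$ is only $\le 1/M$, not $=1/M$. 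The paper repairs this by adding the diagonal correction $\Lambda^*(\Pi_{>M})$ with $\Pi_{>M}=\tfrac{1}{M}\sum_{x>M}\proj{x}$, which preserves $\Tr\rho A\ge 1-\varepsilon$, restores $A_{ii}=1/M$, and contributes only incoherent-rank-one terms so does not raise $C_0(A)$.

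Your anticipated obstacles in the converse are also self-imposed: you only need an SIO $\Lambda$ with $\Tr[\rho\,\Lambda^*(\Psi_M)]\ge 1-\varepsilon$, not one realizing the full twirled Choi matrix with prescribed $B=\1-A$ blocks. The paper just takes a decomposition $A=\tfrac{1}{M}\sum_\alpha\proj{\phi_\alpha}$ with each $|\operatorname{supp}(\phi_\alpha)|\le M$, picks permutations $f_\alpha$ sending $\operatorname{supp}(\phi_\alpha)$ into $[M]$, and sets $K_\alpha=\sum_x c_{\alpha,x}^*\ketbra{f_\alpha(x)}{x}$. Trace preservation then follows \emph{automatically} from $A_{ii}=1/M$, with no garbage Kraus operators, and $\Lambda^*(\Psi_M)=A$ on the nose.
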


\begin{proof}
Suppose that $\tr [\Lambda(\rho)\Psi_M] \geq 1-\varepsilon$ for some SIO map
$\Lambda$ and $\ket{\Psi_M}=\frac{1}{\sqrt{M}}\sum_{i=1}^M\ket{i}$.
Let $\Pi_{>M}=\frac{1}{M}\sum_{x=M+1}^{d_A}\op{x}{x}$.  Notice that
\begin{align}
 1-\varepsilon &\leq \tr [\Lambda(\rho)\Psi_M] \notag\\
            &\leq \tr [\Lambda(\rho)(\Psi_M+\Pi_{>M})] \notag\\
            &=    \tr [\rho A],
\end{align}
where $A:=\Lambda^*(\Psi_M)+\Lambda^*(\Pi_{>M})$ and \Qi{$\Lambda^*$ is the adjoint channel of $\Lambda$.}  Using the form of SIO Kraus operators, we have
\begin{align}
\label{Eq:dual-action}
\Lambda^*(\Psi_M)
 &=\frac{1}{M}\sum_\alpha \sum_{\substack{x\;s.t.\\f_\alpha(x)\in[M]}}\sum_{\substack{x'\;s.t.\\f_\alpha(x')\in[M]}}c_{\alpha,x}^*\op{x}{x'}c_{\alpha,x'}\notag\\
 &=\frac{1}{M}\sum_\alpha\op{\phi_\alpha}{\phi_\alpha},
\end{align}
and likewise
\begin{equation}
\Lambda^*(\Pi_{>M})=\frac{1}{M}\sum_\alpha\sum_{\substack{x\;s.t.\\f_\alpha(x)\not\in[M]}}|c_{\alpha,x}|^2\op{x}{x}.
\end{equation}
Thus $A$ has a decomposition into rank-one vectors each having an incoherent rank no greater than $M$.  Also, for any $y\in\{1,\cdots,d_A\}$, we see that
\begin{align}
\bra{y}A\ket{y}&=\frac{1}{M}\left(\sum_{\substack{\alpha\;s.t.\\f_\alpha(y)\in[M]}}+\sum_{\substack{\alpha\;s.t.\\f_\alpha(y)\not\in[M]}}\right)|c_{\alpha,y}|^2
 =\frac{1}{M},
\end{align}
i.e.~$A_{yy}=\frac{1}{M}$ for all $y$.

The converse involves essentially reversing these steps.  Suppose that
$\tr\rho A \geq 1-\varepsilon$ for some operator $0\leq A\leq \1$ with
$C_0(A)\leq \log M$ and $A_{ii}=\frac{1}{M}$.
Then there exists a decomposition
\begin{align}
 A&=\frac{1}{M}\sum_\alpha\op{\phi_\alpha}{\phi_\alpha}\notag\\
  &=\frac{1}{M}\sum_\alpha\sum_{x,x'=1}^{d_A}c_{\alpha,x}c^*_{\alpha,x'}\op{x}{x'},
\end{align}
where $(c_{\alpha,x})_x$ contains at most $M$ nonzero elements for each $\alpha$.  Hence we can define permutations $f_\alpha$ on the set $\{1,\cdots,|A|\}$ such that
$f_\alpha(x)\in[M]$ for every $x$ and $\alpha$ with $c_{\alpha,x}\not=0$.
The Kraus operators $K_\alpha=\sum_{x=1}^{d_A}c^*_{\alpha,x}\op{f_\alpha(x)}{x}$ satisfy
\begin{align}
\sum_\alpha K_\alpha^\dagger\Psi_M K_\alpha=\Omega.
\end{align}
Furthermore,
$\sum_\alpha K^\dagger_\alpha K_\alpha=\sum_{\alpha}\sum_{x=1}^{d_A}|c_{\alpha,x}|^2\op{x}{x}= \1$,
since by assumption
\[
  \frac{1}{M}=\bra{x}A\ket{x}=\frac{1}{M}\sum_\alpha|c_{\alpha,x}|^2.
\]
Therefore, the $\{K_\alpha\}_\alpha$ define a CPTP SIO  map
$\Lambda$ satisfying $\tr \Lambda(\rho)\Psi_M \geq 1-\varepsilon$.
\end{proof}

\begin{remark}
Comparing with Eq. (\ref{Eq:Hypothesis-DIO}), we see that the one-shot
distillable coherence under SIO takes the form of DIO distillation with
the added constraint of $C_0(A)\leq\log M$.
\end{remark}

\begin{remark}
An explicit calculation of the incoherent rank $C_0$ can be made through semi-definite programming techniques \cite{Ringbauer-2017a}.  However the number of computational constraints scales as $\binom{d}{M}$ for certifying whether a $d$-dimensional state $\rho$ has $C_0(\rho)\geq\log (M+1)$.
\end{remark}

\subsection{Bound coherence exists under SIO}
\label{subsec:bound-SIO}
The constraint on the incoherent rank of $A$ in Theorem \ref{th:SIO}
greatly diminishes the power of SIO to distill coherence.
Here we illustrate this effect by a dramatic example. Consider the state
\begin{equation}\label{boundstate}
  \rho = \frac12 \bigl( \proj{+}\otimes\proj{+} + \proj{-}\otimes\proj{\widetilde{+}} \bigr),
\end{equation}
where
\begin{align}
\ket{+}\ket{+}&=\frac{1}{2}(\ket{00}+\ket{01}+\ket{10}+\ket{11})\notag\\
\ket{-}\ket{\widetilde{+}}&=\frac{1}{2}(\ket{00}+i\ket{01}-\ket{10}-i\ket{11}).\notag
\end{align}
The $n$-copy state $\rho^{\otimes n}$ is then an equal mixture of states belonging
to the ensemble
\[
  \mf{E}_n:=\{\ket{+}\ket{+},\ket{-}\ket{\widetilde{+}}\}^{\otimes n}.
\]

\Qi{We will show that not a single cosbit of coherence can be distilled
from $\rho^{\otimes n}$ by SIO with an error smaller than the minimal
one-copy error.  In comparison, $n$ bits of coherence can be distilled by IO error-free: the first system in each copy of $\rho$ is simply measured
with the IO Kraus operations $\{\op{0}{+},\op{1}{-}\}$ followed by a suitable
controlled phase on the second qubit.
Such a measurement is not possible by SIO.}

\begin{theorem}
\label{th:boundSIO}
For the state $\rho$ \Qi{defined in Eq.~\eqref{boundstate}}, $C_{d,SIO}^{\infty}(\rho) = 0$.
\end{theorem}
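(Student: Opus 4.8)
\medskip
\noindent\textbf{Proof strategy.}
The plan is to feed $\rho^{\otimes n}$ into the one-shot characterisation of Theorem~\ref{th:SIO} and to show that, for \emph{every} target size $M\geq 2$ and \emph{every} $n$, the optimisation is infeasible once the error drops below a fixed constant $\varepsilon^\ast>0$ independent of $n$. This gives $C_{d,SIO}^\varepsilon(\rho^{\otimes n})=0$ for all $n$ whenever $\varepsilon<\varepsilon^\ast$, whence $C_{d,SIO}^\infty(\rho)=\liminf_{\varepsilon\to 0^+}\liminf_{n\to\infty}\tfrac1n C_{d,SIO}^\varepsilon(\rho^{\otimes n})=0$; the reverse inequality $C_{d,SIO}^\infty(\rho)\geq 0$ is trivial, since ``reset to a fixed incoherent basis state'' is SIO.

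First I would record the structure $\rho^{\otimes n}=\tfrac{1}{2^n}\sum_{s\in\{0,1\}^n}\proj{\Psi_s}$ with $\ket{\Psi_s}=\bigotimes_{k=1}^n\ket{\psi_{s_k}}$, $\ket{\psi_0}=\ket{+}\ket{+}$, $\ket{\psi_1}=\ket{-}\ket{\widetilde{+}}$. Labelling the four incoherent basis vectors of a single copy as $\ket 0,\ket 1,\ket 2,\ket 3$, both $\ket{\psi_0}$ and $\ket{\psi_1}$ have all amplitudes of modulus $\tfrac12$: the phases of $\ket{\psi_0}$ are $1,1,1,1$ and those of $\ket{\psi_1}$ are $1,i,-1,-i$, i.e.\ $i^{j}$ at position $j$. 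Hence for computational basis vectors $\ket i,\ket j$ of the $4^n$-dimensional space, indexed by $i=(i_1,\dots,i_n),j=(j_1,\dots,j_n)\in\{0,1,2,3\}^n$, one has $\braket{i}{\Psi_s}=2^{-n}\omega_{i,s}$ with $|\omega_{i,s}|=1$, and
\[
  \sum_{s\in\{0,1\}^n}\omega_{i,s}\,\overline{\omega_{j,s}}\ =\ \prod_{k=1}^n\bigl(1+i^{\,i_k-j_k}\bigr).
\]
This product equals $2^n$ when $i=j$; it vanishes whenever some coordinate has $i_k-j_k\equiv 2\pmod 4$; and otherwise has modulus $2^{n_0}(\sqrt 2)^{\,n-n_0}=2^{\,n-(n-n_0)/2}$ with $n_0=\#\{k:i_k=j_k\}$, so for $i\neq j$ it is at most $2^{\,n-1/2}$. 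That a single disagreeing coordinate already costs a factor $\sqrt 2$ is the quantitative heart of the argument.

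Next, take any $A$ feasible for parameter $M$ in Theorem~\ref{th:SIO}. From $C_0(A)\leq\log M$ we may write $A=\sum_\alpha\proj{\widetilde\phi_\alpha}$ with each (unnormalised) $\ket{\widetilde\phi_\alpha}=\sum_{i\in S_\alpha}\widetilde c_{\alpha,i}\ket i$ supported on a set $S_\alpha$ of at most $M$ basis vectors, and $A_{ii}=1/M$ for all $4^n$ indices gives $\sum_\alpha\|\widetilde\phi_\alpha\|^2=4^n/M$. Expanding
\[
  \Tr\bigl[\rho^{\otimes n}A\bigr]
     = \frac{1}{2^n}\sum_\alpha\sum_{s}\bigl|\braket{\widetilde\phi_\alpha}{\Psi_s}\bigr|^2
     = \frac{1}{2^n 4^n}\sum_\alpha\sum_{i,j\in S_\alpha}\overline{\widetilde c_{\alpha,i}}\,\widetilde c_{\alpha,j}\sum_{s}\omega_{i,s}\overline{\omega_{j,s}},
\]
the diagonal part ($i=j$) contributes exactly $\tfrac{1}{4^n}\sum_\alpha\|\widetilde\phi_\alpha\|^2=\tfrac1M$, while the off-diagonal part is bounded, using the cross-correlation estimate and $\sum_{i\neq j\in S_\alpha}|\widetilde c_{\alpha,i}||\widetilde c_{\alpha,j}|=\bigl(\sum_{i\in S_\alpha}|\widetilde c_{\alpha,i}|\bigr)^2-\|\widetilde\phi_\alpha\|^2\leq(M-1)\|\widetilde\phi_\alpha\|^2$ (Cauchy--Schwarz with $|S_\alpha|\leq M$), by $\tfrac{2^{n-1/2}}{2^n4^n}(M-1)\sum_\alpha\|\widetilde\phi_\alpha\|^2=\tfrac{M-1}{M\sqrt 2}$. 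Hence
\[
  \Tr\bigl[\rho^{\otimes n}A\bigr]\ \leq\ \frac1M+\frac{M-1}{M\sqrt 2}\ =\ \frac1{\sqrt 2}+\frac1M\Bigl(1-\frac1{\sqrt 2}\Bigr)\ \leq\ \frac12+\frac1{2\sqrt 2}\qquad (M\geq 2),
\]
the last step because the expression is decreasing in $M$. Taking $\varepsilon^\ast:=\tfrac12-\tfrac1{2\sqrt 2}>0$, for $\varepsilon<\varepsilon^\ast$ no feasible $A$ (any $M\geq 2$, any $n$) can satisfy $\Tr[\rho^{\otimes n}A]\geq 1-\varepsilon$, so $C_{d,SIO}^\varepsilon(\rho^{\otimes n})=0$ identically in $n$, and the claim follows.

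I expect the main subtlety to be that one cannot short-circuit to the $M=2$ case by monotonicity: $\Psi_M$ cannot be converted to $\Psi_2$ even by IO for $M\geq 3$ (the pure-state majorisation criterion fails, as $\tfrac12>\tfrac1M$), so the estimate genuinely has to be uniform in $M$ — which is why the bound above is phrased for all $M\geq 2$ at once. The remaining ingredients — the single-copy phase table, the factorisation $\prod_k(1+i^{i_k-j_k})$ together with its modulus bound, and monotonicity of $\tfrac1M+\tfrac{M-1}{M\sqrt 2}$ in $M$ — are routine.
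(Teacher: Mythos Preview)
Your proof is correct and follows essentially the same route as the paper's: both bound $\Tr[\rho^{\otimes n}A]$ via the cross-correlation estimate $\bigl|\sum_s\braket{i}{\Psi_s}\braket{\Psi_s}{j}\bigr|\leq 2^{\,n-1/2}$ for $i\neq j$ --- your product formula $\prod_k(1+i^{\,i_k-j_k})$ is simply a cleaner packaging of the paper's phase-counting proposition --- and arrive at the same threshold $\varepsilon^\ast=\tfrac12-\tfrac{\sqrt2}{4}$. One correction to your closing remark: the pure-state majorisation criterion goes the other way (under IO/SIO, $\ket\psi\to\ket\phi$ iff $\Delta(\psi)\prec\Delta(\phi)$, and $(1/M,\ldots,1/M)\prec(1/2,1/2,0,\ldots,0)$ holds), so $\Psi_M\to\Psi_2$ \emph{is} achievable by SIO for every $M\geq 2$; the paper accordingly treats only $M=2$, and your uniform-in-$M$ bound, while correct, is not strictly needed.
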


The proof of this will follow by studying the structure of $\rho^{\otimes n}$ and showing that for a fixed value of $\epsilon$ (independent of $n$), $\tr\rho^{\otimes n} A<1-\epsilon$ for any operator $A$ having an incoherent rank of two and satisfying the conditions of Theorem \ref{th:SIO}.  The key property we use is that the eigenvectors of $\rho^{\otimes n}$ will always be maximally coherent states with complex phases belonging to $\{0,\frac{\pi}{2},\pi,\frac{3\pi}{2}\}$.   For the $n$-copy analysis to be tractable, we need to introduce some new notation.  Let $\mbf{b}_j=(b_{j,0},b_{j,1},\cdots,b_{j,n-1})\in\{0,1\}^{n}$ denote the $j^{\text{th}}$ binary sequence of length $n$.  We then define an ensemble of $2^n$ equiprobable states $\{\ket{\mbf{b}_j}\}_{j=1}^{2^n}$, where
\begin{align}
&\ket{\mbf{b}_{j}}:=\notag\\
&\frac{1}{\sqrt{4^n}}\sum_{m_0,\cdots,m_{n-1}=0}^3\exp\left(i\frac{\pi}{2}\sum_{k=0}^{n-1}b_{j,k}m_k\right)\bigg|\sum_{k=0}^{n-1}4^km_k\bigg\rangle.
\end{align}
We claim that, up to relabelling, this ensemble is precisely $\mf{E}_n$.  For example, when $n=1$ we have
\begin{align}
\ket{\mbf{b}_1}&=\frac{1}{2}\left(\ket{0}+\ket{1}+\ket{2}+\ket{3}\right)\notag\\
\ket{\mbf{b}_2}&=\frac{1}{2}\left(\ket{0}+i\ket{1}-\ket{2}-i\ket{3}\right),\notag
\end{align}
and for $n=2$ we have


\begin{equation*}
\begin{aligned}
\ket{\mbf{b}_1}&=\frac{1}{4}(\ket{0}+\ket{1}+\ket{2}+\ket{3}+\ket{4}\\
&+\ket{5}+\ket{6}+\cdots+\ket{14}+\ket{15})\\
\ket{\mbf{b}_2}&=\frac{1}{4}(\ket{0}+i\ket{1}-\ket{2}-i\ket{3}+\ket{4}\\
&+i\ket{5}-\ket{6}-\cdots-\ket{14}-i\ket{15})\\
\ket{\mbf{b}_3}&=\frac{1}{4}(\ket{0}+\ket{1}+\ket{2}+\ket{3}+i\ket{4}\\
&+i\ket{5}+i\ket{6}+\cdots-i\ket{14}-i\ket{15})\\
\ket{\mbf{b}_4}&=\frac{1}{4}(\ket{0}+i\ket{1}-\ket{2}-i\ket{3}+i\ket{4}\\
&-\ket{5}-i\ket{6}+\cdots+i\ket{14}-\ket{15}).
\end{aligned}
\end{equation*}
The case of general $n$ can be checked by induction.

Now for any two distinct vectors $\vec{m}=(m_0,\cdots,m_{n-1})$ and $\vec{m}'=(m_0',\cdots,m_{n-1}')$ belonging to $\{0,1,2,3\}^{n}$, let us denote the kets
\begin{align}
\ket{\vec{m}}&:=\bigg|\sum_{k=0}^{n-1}4^km_k\bigg\rangle,&\ket{\vec{m}'}&:=\bigg|\sum_{k=0}^{n-1}4^km'_k\bigg\rangle.
\end{align}
The relative phase between $\ket{\vec{m}}$ and $\ket{\vec{m}'}$ in any $\ket{\mbf{b}_j}$ is defined to be
\begin{equation}
\frac{\pi}{2}\sum_{k=0}^{n-1}b_{j,k}(m_k'-m_k)\in\left\{0,\frac{\pi}{2},\pi,\frac{3\pi}{2}\right\}.
\end{equation}
We now make a crucial observation about the distribution of relative phases among the $\ket{\mbf{b}_j}$ in $\mf{E}_n$.
\begin{proposition}
For any fixed pair of distinct vectors $\vec{m}$ and $\vec{m}'$, at most half of the $\ket{\mbf{b}_i}$ in $\mf{E}_n$ have the same relative phase between  $\ket{\vec{m}}$ and $\ket{\vec{m}'}$.
\end{proposition}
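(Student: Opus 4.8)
The plan is to reduce the statement to an elementary counting argument on the hypercube $\{0,1\}^n$ via a pivot-bit pairing. First I would set $d_k := m_k' - m_k$ for $k=0,\dots,n-1$. Since $m_k,m_k'\in\{0,1,2,3\}$ we have $d_k\in\{-3,\dots,3\}$, and the hypothesis $\vec m\neq\vec m'$ guarantees that some coordinate $k_0$ has $d_{k_0}\neq 0$. The relative phase between $\ket{\vec m}$ and $\ket{\vec m'}$ in $\ket{\mbf b_j}$ is $\frac{\pi}{2}\sum_{k} b_{j,k}d_k$, so it depends only on the integer $s(\mbf b) := \sum_{k} b_{k} d_k \bmod 4$. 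Thus the proposition is equivalent to: for every residue $c\in\{0,1,2,3\}$, at most $2^{n-1}$ of the $2^n$ strings $\mbf b\in\{0,1\}^n$ satisfy $s(\mbf b)\equiv c \pmod{4}$.

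The key observation I would make is that $d_{k_0}\not\equiv 0\pmod{4}$: indeed $d_{k_0}\in\{\pm1,\pm2,\pm3\}$ and none of these is divisible by $4$. This is the one point where the specific phase set $\{0,\tfrac{\pi}{2},\pi,\tfrac{3\pi}{2}\}$ matters — equivalently, that we are doing arithmetic modulo $4$ rather than modulo a larger modulus. Given this, I would fix the $n-1$ coordinates $b_k$ with $k\neq k_0$ arbitrarily; as $b_{k_0}$ ranges over $\{0,1\}$, the value $s(\mbf b)$ takes two values differing by $d_{k_0}\not\equiv 0\pmod{4}$, hence these two values are \emph{distinct} modulo $4$. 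Consequently, among the two strings agreeing on all coordinates except $k_0$, at most one has $s(\mbf b)\equiv c$. Summing over the $2^{n-1}$ choices of the remaining coordinates gives at most $2^{n-1}$ strings with relative phase $c$, which is exactly the "at most half" bound. (Equivalently, one may phrase this as partitioning $\{0,1\}^n$ into $2^{n-1}$ pairs $\{\mbf b,\mbf b\oplus e_{k_0}\}$, where $e_{k_0}$ flips the $k_0$-th bit, and noting each pair contributes at most one string to each phase class.)

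There is essentially no hard step here. The only things to be careful about are the reduction "relative phase equals $c$ iff $s(\mbf b)\equiv c\pmod{4}$," which is immediate from the definition of $\ket{\mbf b_j}$, and the elementary fact that a nonzero element of $\{-3,\dots,3\}$ remains nonzero modulo $4$, which legitimizes the pairing. Everything else is the standard "flip a pivot coordinate" parity trick, so the write-up should be a single short paragraph.
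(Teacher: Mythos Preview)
Your proposal is correct and follows essentially the same approach as the paper: both proofs pick a coordinate $k_0$ with $m_{k_0}'-m_{k_0}\neq 0$, pair up the bit-strings by flipping that coordinate, and use that the resulting phase shift $d_{k_0}\not\equiv 0\pmod 4$ forces the two members of each pair into different phase classes. Your write-up is in fact slightly cleaner in making the ``nonzero mod $4$'' point explicit, whereas the paper leaves it implicit in the disjointness of the sets $A_{\pi/2}$ and $A_{\pi/2(1-\Delta m)}$.
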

\begin{proof}
Let $k\in\{0,\cdots,n-1\}$ be chosen such that $m_{k}'-m_{k}\not=0$.  Let $\Delta m=m_k'-m_k$.  Consider all the states $\ket{\mbf{b}_j}\in\mf{E}_n$ with binary sequence $\mbf{b}_j$ such that $b_{j,k}=0$; this represents exactly half of all states in the ensemble $\mf{E}_n$.  We partition these states into four groups, $A_0$, $A_{\pi/2}$, $A_{\pi}$, and $A_{3\pi/2}$, according to their respective relative phases between $\ket{\vec{m}}$ and $\ket{\vec{m}'}$.  Now we consider the other half of the states in $\mf{E}_n$, those having $b_{j,k}=1$.  We likewise partition these states into sets $B_0$, $B_{\pi/2}$, $B_{\pi}$, and $B_{3\pi/2}$ of common relative phase between $\ket{\vec{m}}$ and $\ket{\vec{m}'}$.  Notice that for any $\ket{\mbf{b}_j}$ in, say, $A_{\pi/2}$, there will be a corresponding $\ket{\mbf{b}_{j'}}$ in $B_{\pi/2(1+\Delta m})$ and vice versa, the only difference between $\mbf{b}_j$ and $\mbf{b}_{j'}$ being their $k^{th}$ component.  Hence $|A_0|=|B_{\pi/2\Delta m}|$, $|A_{\pi/2}|=|B_{\pi/2(1+\Delta m)}|$, $|A_\pi|=|B_{\pi/2(2+\Delta m)}|$ and $|A_{3\pi/2}|=|B_{\pi/2(3+\Delta m)}|$, where all arithmetic is done modular $4$.  Therefore, the total number of states  in the ensemble having a relative phase of, say, $\pi/2$ is
\begin{align}
|A_{\pi/2}|+|B_{\pi/2}|&=|A_{\pi/2}|+|A_{\pi/2(1-\Delta m)}|\leq 2^{n-1}.\notag
\end{align}
The same bound likewise holds for the other three relative phases.
\end{proof}

\begin{proof-of}{Theorem~\ref{th:boundSIO}}
Consider an arbitrary \Qi{vector in the complex linear span of $\ket{\vec{m}}$, $\ket{\vec{m}'}$},
\begin{equation}
\ket{\psi}=\cos\theta\ket{\vec{m}}+\sin\theta e^{i\phi}\ket{\vec{m}'}.
\end{equation}
Let $N_0$, $N_{\pi/2}$, $N_\pi$, $N_{3\pi/2}$ denote the number of states in $\mf{E}_n$ having a relative phase between $\ket{\vec{m}}$ and $\ket{\vec{m}'}$ of $0$, $\pi/2$, $\pi$, and $3\pi/2$, respectively.  We can then explicitly compute
\begin{align}
&\bra{\psi}\rho^{\otimes n}\ket{\psi}
 =\frac{1}{2^n4^n}\sum_{i=1}^{2^n}|\ip{\psi}{\mbf{b}_i}|^2\notag\\
&=\frac{N_0}{2^n4^n}|\cos\theta\!+\!e^{-i\phi}\sin\theta|^2
  +\frac{N_{\pi/2}}{2^n4^n}|\cos\theta\!+\!ie^{-i\phi}\sin\theta|^2\notag\\
&\phantom{=}
  +\frac{N_\pi}{2^n4^n}|\cos\theta\!-\!e^{-i\phi}\sin\theta|^2
  +\frac{N_{3\pi/2}}{2^n4^n}|\cos\theta\!-\!ie^{-i\phi}\sin\theta|^2\notag\\
&=\frac{1}{4^n}
  +\frac{\sin2\theta}{2^n4^n}\left[(N_0\!-\!N_\pi)\cos\phi+(N_{3\pi/2}\!-\!N_{\pi/2})\sin\phi\right],\label{Eq:N-copy-overlap}
\end{align}
where the last line follows by expanding out the squared amplitudes, the identity
$2\cos\theta\sin\theta=\sin2\theta$, and using the
fact that $N_0+N_{\pi/2}+N_\pi+N_{3\pi/2}=2^n$.  Our goal is to
maximize \eqref{Eq:N-copy-overlap} under the constraint that
$\max\{N_0,N_{\pi/2},N_\pi,N_{3\pi/2}\}\leq 2^{n-1}$.
This constraint implies that $|N_0-N_\pi|\leq 2^{n-1}$ and $|N_{3\pi/2}-N_{\pi/2}|\leq 2^{n-1}$,
and so
\begin{align}
\bra{\psi}\rho^{\otimes n}\ket{\psi}&\leq \frac{1}{4^n}\left(1+\cos\theta\sin\theta(|\cos\phi|+|\sin\phi|)\right)\notag\\
&\leq \frac{1}{4^n}\left(1+\frac{\sqrt{2}}{2}\right).
\end{align}
Suppose now that $A$ has an incoherent rank of two and satisfies $\tr[A]=\frac{4^n}{2}$.  Then by the previous calculation we have the fidelity bound
\begin{align}
\tr \rho^{\otimes n}A \leq (\tr A) \frac{1}{4^n}\left(1+\frac{\sqrt{2}}{2}\right)
                    &=   \frac{1}{2}\left(1+\frac{\sqrt{2}}{2}\right)\notag\\
                    &=   1-\varepsilon,
\end{align}
where $\varepsilon=\frac{1}{2}-\frac{\sqrt{2}}{4}$ is independent of $n$.
This is precisely the single-copy error bound.
As a consequence, it follows that $C_{d,SIO}^\infty(\rho)=0$,
proving the theorem.
\end{proof-of}

\begin{remark}
This result should be compared with the recent proof,
by Marvian \cite{Marvian:no-go}, that coherence distillation is
generally impossible in the resource theory of energy conservation,
which is characterized by the class of so-called time-translation-covariant
operations (TIO). That class is difficult to compare with DIO, as
at the single-system level, TIO is contained in DIO, but since the
composition of systems works differently, it may result in TIO operations
outside DIO on the multi-system level.

The result of \cite{Marvian:no-go} shows that for generic mixed states,
the rate of distilling cosbit states $\ket{\Psi_2}$ is zero, but
that at the same time it is possible to obtain a single cosbit (or a sublinear
number) with fidelity going to $1$ as asymptotically many copies of the
mixed resource become available.
In contrast, here we showed that under DIO the fidelity remains bounded
away from $1$, irrespective of the number of resource states.
\end{remark}

\section{Recovering the information theoretic limit}
\label{sec:AEP}
In the asymptotic limit, the coherence distillation rate under operation class
$\mathcal{O}$ is defined as
\yunchao{
\begin{equation}
  C_{d,\mathcal{O}}^{\infty}(\rho) = \liminf_{\varepsilon \to 0^+}\liminf_{n\to\infty}\frac{1}{n}C_{d,\mathcal{O}}^\varepsilon(\rho^{\otimes n}).
\end{equation}}
From \cite{Winter16} and \cite{Regula-one-shot} (see also \cite{Eric:MIO-CoF}) we know that
$C_{d,IO}^{\infty}(\rho) = C_{d,DIO}^{\infty}(\rho) = C_{d,MIO}^{\infty}(\rho) = C_r(\rho)$.
Below we show that our results on one-shot IO distillation can be used
to recover the asymptotic limit, at the same time improving the result by showing
that the limit exists and equals $C_r(\rho)$ for any fixed $0<\varepsilon<1$;
such a statement is known as a \emph{strong converse} in information theory.

\begin{theorem}
\label{thm:asymptotic-IO}
For any state $\rho$ and any $0<\varepsilon<1$,
\begin{equation*}\begin{split}
  \lim_{n\to\infty}\frac{1}{n}C_{d,IO}^\varepsilon(\rho^{\otimes n})
     &= \lim_{n\to\infty}\frac{1}{n}C_{d,DIO}^\varepsilon(\rho^{\otimes n}) \\
     &= \lim_{n\to\infty}\frac{1}{n}C_{d,MIO}^\varepsilon(\rho^{\otimes n})
      = C_r(\rho).
\end{split}\end{equation*}
\end{theorem}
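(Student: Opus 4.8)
The plan is to sandwich all three one-shot rates between two smoothed min-relative-entropy-of-coherence quantities that each obey a quantum asymptotic equipartition property, and then squeeze. Applying the chain \eqref{summary} to the i.i.d.\ state $\rho^{\otimes n}$, with a fixed $0<\eta<\varepsilon/2$ and using Theorem~\ref{thm:miodio}, gives
\begin{equation*}
  C_{\min}^{\frac{\varepsilon}{2}-\eta}(\rho^{\otimes n})-2\log\tfrac{1}{\eta}
    \leq C_{d,\text{IO}}^{\varepsilon}(\rho^{\otimes n})
    \leq C_{d,\text{DIO}}^{\varepsilon}(\rho^{\otimes n})
     =   C_{d,\text{MIO}}^{\varepsilon}(\rho^{\otimes n})
    \leq C_{\min}^{\sqrt{\varepsilon(2-\varepsilon)}}(\rho^{\otimes n}).
\end{equation*}
Dividing by $n$ kills the additive term $\frac{2}{n}\log\frac{1}{\eta}$, so the theorem reduces to the claim that for every fixed $0<\alpha<1$,
\begin{equation*}
  \lim_{n\to\infty}\frac{1}{n}\,C_{\min}^{\alpha}(\rho^{\otimes n}) = C_r(\rho).
\end{equation*}
Since $\tfrac{\varepsilon}{2}-\eta$ and $\sqrt{\varepsilon(2-\varepsilon)}$ both lie in $(0,1)$ for $0<\varepsilon<1$, this claim gives $\liminf_n\frac1n C_{d,\text{IO}}^\varepsilon(\rho^{\otimes n})\geq C_r(\rho)$ and $\limsup_n\frac1n C_{d,\text{MIO}}^\varepsilon(\rho^{\otimes n})\leq C_r(\rho)$; together with $C_{d,\text{IO}}^\varepsilon\leq C_{d,\text{DIO}}^\varepsilon=C_{d,\text{MIO}}^\varepsilon$ this pins all three limits to $C_r(\rho)$.

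To prove the reduced claim I would use the identity recorded around \eqref{eq:hmincmin} (with the reverse inequality supplied by \cite{Coles2012,Liu18}): if $\ket{\psi}^{AE}=\sum_x\sqrt{p_x}\ket{x}^A\ket{\psi_x}^E$ purifies $\rho^A=\rho$ and $\omega^{AE}=(\Delta\ox\id)\psi=\sum_x p_x\proj{x}^A\ox\psi_x^E$, then $C_{\min}^{\alpha}(\rho)=H_{\min}^{\alpha}(A|E)_\omega$. As the incoherent basis of $A^{\otimes n}$ is the product basis, $\omega^{\otimes n}$ is precisely the dephased cq-state attached to $\rho^{\otimes n}$, so $C_{\min}^{\alpha}(\rho^{\otimes n})=H_{\min}^{\alpha}(A^n|E^n)_{\omega^{\otimes n}}$. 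The fully quantum asymptotic equipartition property of Tomamichel, Colbeck and Renner then yields, for every fixed $0<\alpha<1$, $\lim_n\frac1n H_{\min}^{\alpha}(A^n|E^n)_{\omega^{\otimes n}}=H(A|E)_\omega$. Finally, $\omega^{AE}$ is block-diagonal in the incoherent basis of $A$ with the $\psi_x^E$ pure, so $S(\omega^{AE})=H(\{p_x\})=S(\Delta(\rho))$, while $\omega^E=\Tr_A\omega^{AE}=\Tr_A\psi^{AE}$ has the same spectrum as $\rho$, so $S(\omega^E)=S(\rho)$; hence $H(A|E)_\omega=S(\Delta(\rho))-S(\rho)=C_r(\rho)$, completing the reduction.

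The argument is essentially bookkeeping: the substantive work sits in Theorems~\ref{thm:achieve} and~\ref{thm:upper} and in the SDP characterization of \cite{Regula-one-shot} packaged as Theorem~\ref{thm:miodio}, and the sole external ingredient is the quantum AEP. The one point that deserves real care---and that is what upgrades the conclusion from the previously known coding theorem to the strong-converse statement $\lim_n\frac1n C_{d,\cdot}^\varepsilon(\rho^{\otimes n})=C_r(\rho)$ valid for \emph{every} $\varepsilon\in(0,1)$---is that the AEP is invoked with a \emph{fixed} smoothing radius $\alpha\in(0,1)$ rather than with $\alpha\to0$, and with the purified-distance ball used throughout; these are exactly the hypotheses of the Tomamichel--Colbeck--Renner theorem, so no additional smoothing, typicality, or de~Finetti argument is required.
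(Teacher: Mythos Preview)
Your argument is essentially the paper's own proof: the same sandwich \eqref{summary}, the same reduction to an AEP statement for $C_{\min}^\alpha$, the same identification with $H_{\min}^\alpha(A|E)_\omega$ followed by Tomamichel--Colbeck--Renner, and the same computation $H(A|E)_\omega=S(\Delta(\rho))-S(\rho)=C_r(\rho)$. The one place the paper is more careful is the smoothed identity $C_{\min}^\alpha(\rho^{\otimes n})=H_{\min}^\alpha(A^n|E^n)_{\omega^{\otimes n}}$: the unsmoothed equality of \cite{Coles2012,Liu18} does not by itself match the two smoothings (one ranges over $B_\alpha(\rho^{\otimes n})$ on $A^n$, the other over $B_\alpha(\omega^{\otimes n})$ on $A^nE^n$), and the paper closes this by noting that the optimal $H_{\min}^\alpha$-smoothing of a cq state may itself be taken cq \cite{Renner-PhD} and then lifting back to some $\rho'\in B_\alpha(\rho^{\otimes n})$ via Uhlmann.
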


\begin{proof}
Recall the results in Theorems \ref{thm:achieve} and
\ref{thm:upper}, which state that for any $\eta < \frac12 \varepsilon$,
\begin{equation}\begin{split}
  \label{one-shot-IO}
  C_{\min}^{\frac{\varepsilon}{2}-\eta}(\rho)-2\log\frac{1}{\eta}
     &\leq C_{d,IO}^{\varepsilon}(\rho) \\
     &\leq C_{d,MIO}^{\varepsilon}(\rho) \\
     &=    C_{d,DIO}^{\varepsilon}(\rho)
      \leq C_{\min}^{\sqrt{\varepsilon(2-\varepsilon)}}(\rho).
\end{split}\end{equation}
Hence, to show the theorem, we only need to prove that for all $0<\delta<1$,
\begin{equation}
  \lim_{n\to\infty}\frac{1}{n}C_{\min}^{\delta}(\rho^{\otimes n})=C_r(\rho),
\end{equation}
which is equivalent to
\begin{equation}\label{asymptotic-IO-Hmin}
  \lim_{n\to\infty}\frac{1}{n}\max_{\rho'\in B_\delta({\rho^A}^{\otimes n})} H_{\min}(X^n|E^n)_{\omega'}
     =C_r(\rho).
\end{equation}
Here, $\omega_{X^nE^n}'=(\Delta_{A^n}\ox\id_{E^n})\proj{\psi'}$
and $\ket{\psi'}$ is an arbitrary purification of $\rho_A'$.
Recall the quantum asymptotic equipartition theorem \cite{tomamichel2009fully},
which states that for any $0<\eta<1$,
\begin{equation}
  \lim_{n\to\infty}\frac{1}{n}\max_{\rho_{AB}'\in B_\eta(\rho_{AB}^{\otimes n})} H_{\min}(A|B)_{\rho_{AB}'}
       =H(A|B)_{\rho_{AB}}.
\end{equation}

Now, in one direction, if we have a state $\rho'\in B_\delta(\rho^{\otimes n})$,
then by Uhlmann's characterization of the fidelity
there exists a purification $\psi'\in B_\delta(\psi^{\otimes n})$,
hence $\omega' = (\Delta_{A^n}\ox\id_{E^n})\psi' \in B_\delta(\omega^{\otimes n})$,
with $\omega^{XE} = (\Delta_A\ox\id_E)\psi$.
In the other direction, for $\omega''\in B_\delta(\omega^{\otimes n})$,
it is known that since $\omega$ is a cq-state, an optimal
$\omega''$ for $H_{\min}(X^n|E^n)$
may be assumed to be a cq-state as well \cite{Renner-PhD},
hence we can find a $\psi'\in B_\delta(\psi^{\otimes n})$
such that $\omega'' = (\Delta_{A^n}\ox\id_{E^n})\psi'$.
Thus, we can conclude that
\[
  \max_{\rho_A'\in B_\delta({\rho^A}^{\otimes n})} H_{\min}(X^n|E^n)_{\omega'}
    = \max_{\omega''\in B_\delta(\omega^{\otimes n})} H_{\min}(X^n|E^n)_{\omega''},
\]
where the left hand side corresponds to Eq. (\ref{asymptotic-IO-Hmin}).

But this means that we can apply the quantum AEP directly, and get
\[\begin{split}
  \lim_{n\to\infty}\frac{1}{n}C_{\min}^{\delta}(\rho^{\otimes n})
    &= \lim_{n\to\infty}\frac{1}{n}\max_{\omega'\in B_\delta({\omega}^{\otimes n})} H_{\min}(X^n|E^n)_{\omega'} \\
    &= H(X|E)_{\omega} = C_r(\rho),
\end{split}\]
as claimed.
\end{proof}

\section{Coherence distillation \protect\\and randomness generation}
\label{sec:randomness}
Suppose a purification of $\rho^A$ is written as
\[
  \ket{\psi}^{AE} = \sum_x \sqrt{p_x}\ket{x}^A\ket{\psi_x}^E,
\]
we use this state to generate randomness by first performing a
computational basis measurement.
The dephased cq-state after measuring $A$ in the computational basis is
\[
  \omega^{AE} = (\Delta\ox\id)\psi = \sum_x p_x \proj{x}^A \ox \psi_x^E.
\]
Considering the measurement as a raw randomness generation process,
a subsequent randomness extraction (via a deterministic function $G$)
can further extract a random string that is almost uniform and independent of $E$.
We think of the function as an incoherent operation,
by letting $G(\ketbra{x}{y}) = \delta_{xy}\proj{G(x)}$. This identification
is natural as every incoherent (MIO) operation $\Lambda$ defines an
associated classical channel via $\Lambda(\proj{x}) = \sum_y \Lambda(y|x)\proj{y}$.

Denote $\ell^\varepsilon_{\textrm{ext}}$ to be the maximum length of the extractable randomness
that is $\varepsilon$-close to a string that it is perfectly uniform and independent of $E$, i.e.
\begin{equation}
  \ell^\varepsilon_{\textrm{ext}}(\rho_A)
   = \max_{G}\left\{\log M: \frac{1}{2}\left\|(G\!\ox\!\id)\omega^{AE}\!-\!\tau^K\!\otimes\!\omega^E\right\|_1\le \varepsilon\right\}\!,
\end{equation}
where we recall the notation $\tau^K = \frac{1}{M}\1_K$ for the maximally
mixed state of the $M$-dimensional key system.

Note that our definition of extractable randomness
differs somewhat from the one in \cite{PhysRevA.97.012302}; in that work, a
model based on incoherent operations was proposed, which
is shown in Fig.~\ref{fig:extract}(a).
The main process consists of three parts, incoherent operations $\Lambda$, dephasing operation $\Delta$, a random hashing function as the extractor $\mathrm{Ext}$. Here our definition is more straightforward and we do not need to perform the real incoherent operations. As shown in Fig.~\ref{fig:extract}(b), after the dephasing operation $\Delta$, we use a function $G$ as an extractor to extract the secure randomness. This function has to be deterministic, as opposed to a noisy channels, since otherwise infinite randomness can be generated independent of $E$.

Moreover, in order to obtain the optimal $G$ in our definition, we first consider the randomness extraction process via DIO which is shown in Fig.~\ref{fig:extract}(c), where we apply the DIO distillation followed by dephasing operation $\Delta$, a classical extractor (which may not be needed). Benefiting from the property of DIO, we can change the order of DIO and $\Delta$, which implies that the DIO distillation may act as a good extractor (the blue part in Fig.~\ref{fig:extract}(d)). The only remaining problem here is that we would have to show that this DIO operation gives rise to a deterministic classical channel, which is in general not true.
For instance, the optimal coherence distillation process under DIO derived in Theorem~\ref{thm:miodio}
has the property that $\Lambda(\proj{x}) = \tau_K$ for all $x$. Via the permutation twirling Eq.~(\ref{transform}), this can be imposed equally on any optimal IO distillation process, and hence
even on DIIO = IO$\cap$DIO.

Instead, inspired by Remark \ref{rem:DIIO}, we know that a suboptimal but achievable IO distillation operation is also a DIO operation and after a modification (another dephasing channel) we can construct a valid extractor $G$ from it, which is shown in the proof of Theorem~\ref{th:extractIO}.

\begin{figure}[bht]
\centering \resizebox{9cm}{!}{\includegraphics[scale=1]{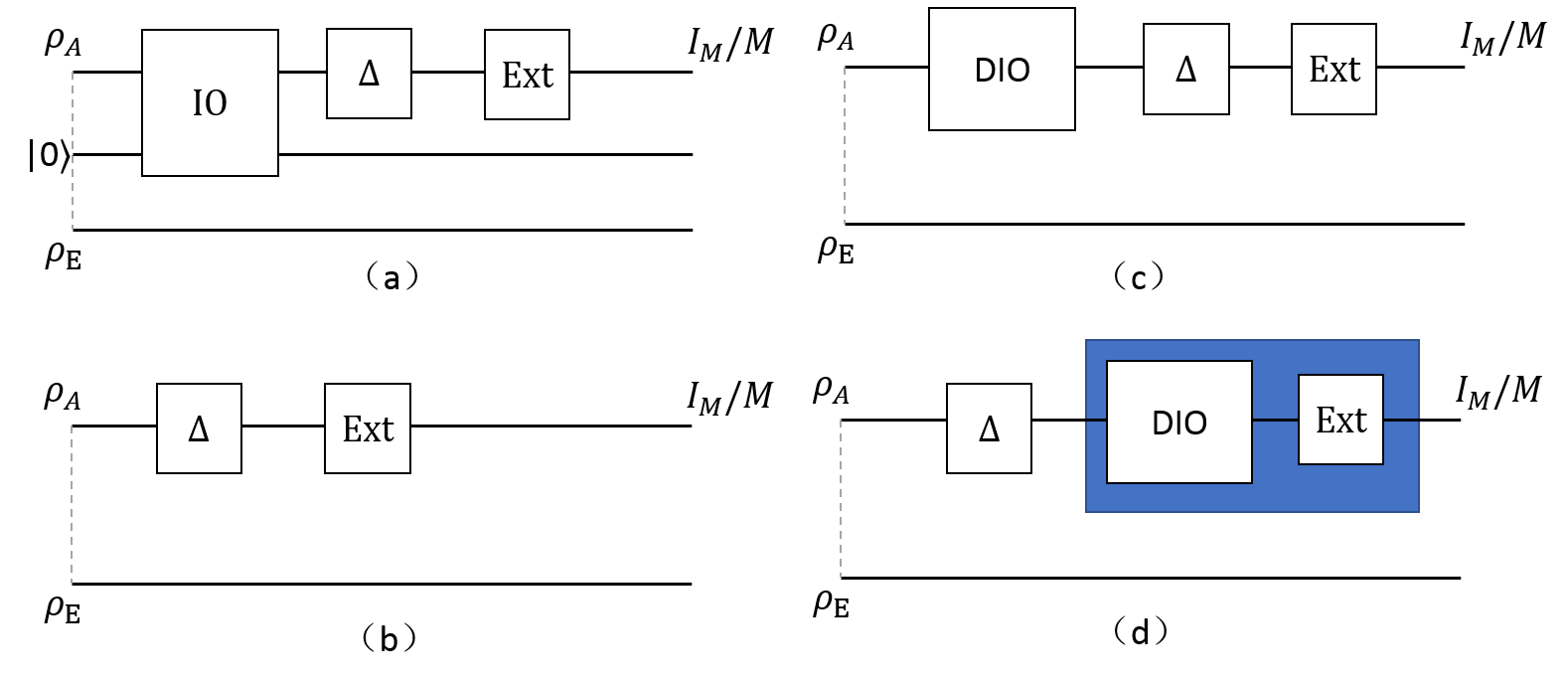}}
\caption{The different schemes for extracting randomness. (a) Extracting randomness using incoherent strategy in \cite{PhysRevA.97.012302}. Alice is allowed to perform the unitary IO operation on her system together with another ancilla system prepared in $\ket{0}\bra{0}$.
 (b) The randomness extraction process defined in our protocol. The extractor is implemented by a deterministic function $G$ (c) and (d) Applying DIO distillation for extracting secure randomness. Due to the property of DIO, we can achieve a new operation (blue part) combining the DIO distillation operation and original extractor. } \label{fig:extract}
\end{figure}

We first recall that every valid $G$ can be used for IO distillation.
Here, we consider $G$ as a deterministic function, and
$\omega^E=\psi^E$ is the reduced state of $\omega^{AE}$ on system E.
For every function $G$ satisfying
$\frac{1}{2}\left\|(G\ox\id)\omega^{AE}-\tau^K\otimes \omega^E\right\|_1\le \varepsilon$,
we can substitute it in the proof of
Theorem \ref{thm:achieve} and obtain an incoherent distillation channel.
Thus as the maximal distillable rate $C^\varepsilon_{d,IO}(\rho)$, it satisfies
\begin{equation}\label{extrac-Io}
  C^{2\varepsilon}_{d,IO}(\rho)\geq \ell^\varepsilon_{\textrm{ext}}(\rho).
\end{equation}

The IO map achieving $C^{2\varepsilon}_{d,IO}(\rho)$ can also be applied to
extract randomness by the following theorem.

\begin{theorem}\label{th:extractIO}
For an arbitrary state $\rho$ and $0<\varepsilon<1$,
\begin{equation}
  \ell^{\sqrt{2\varepsilon}}_{\textrm{ext}}(\rho)
      \geq C^{(\frac{\varepsilon}{2}-\eta)^2/2}_{d,IO}(\rho) - 2\log\frac{1}{\eta},
\end{equation}
for any $0< \eta < \frac{\varepsilon}{2}$.
\end{theorem}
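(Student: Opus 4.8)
The plan is to reverse the logic of Theorem~\ref{thm:achieve}: there we started from a good function $G$ (provided by the privacy amplification bound of Renner) and built an IO distillation protocol; here I want to start from a good IO distillation protocol, extract from its structure a function $G$, and argue that this $G$ is a valid randomness extractor. The natural candidate is the DIIO protocol constructed in the proof of Theorem~\ref{thm:achieve} and analysed in Remark~\ref{rem:DIIO}, since that protocol is built \emph{around} an isometry of the form $U:\ket{x}^A\mapsto\ket{G(x)}^K\ket{x}^A$ for an explicit function $G$. So first I would invoke Corollary~\ref{thm:diioachieve} to fix a DIIO distillation map $\Lambda$ achieving $\log M = C^{(\frac{\varepsilon}{2}-\eta)^2/2}_{d,IO}(\rho)$ with error parameter $(\frac{\varepsilon}{2}-\eta)^2/2$ — note the error here is chosen so that after the fidelity-to-trace-distance conversion it becomes $\frac{\varepsilon}{2}-\eta$ in trace-distance form, matching the scale at which Renner's bound operates.

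Second, I would read off the function $G:\mathcal{X}\to[M]$ from the construction: it is precisely the $G$ appearing in the isometry $U$ inside the proof of Theorem~\ref{thm:achieve}, and by the defining property established there, the state $\Omega^{KE} = (G\ox\id)\omega^{AE} = \sum_x p_x \proj{G(x)}^K \ox \psi_x^E$ satisfies $\frac12\|\Omega^{KE} - \tau_K\ox\sigma^E\|_1 \leq \frac{\varepsilon}{2}$ for \emph{some} state $\sigma^E$. The subtlety, and I expect this to be the main obstacle, is that the definition of $\ell^\varepsilon_{\textrm{ext}}$ demands closeness to $\tau^K\ox\omega^E$ with $\omega^E$ the \emph{actual} reduced state, whereas Renner's privacy amplification only guarantees closeness to $\tau_K\ox\sigma^E$ for an a priori uncontrolled $\sigma^E$. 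The fix is the standard one: tracing out $K$ from $\Omega^{KE}$ gives $\omega^E$ on one side and $\sigma^E$ on the other, so $\frac12\|\omega^E - \sigma^E\|_1 \leq \frac{\varepsilon}{2}$ by monotonicity of trace distance under partial trace; then by the triangle inequality $\frac12\|\Omega^{KE} - \tau_K\ox\omega^E\|_1 \leq \frac12\|\Omega^{KE}-\tau_K\ox\sigma^E\|_1 + \frac12\|\tau_K\ox\sigma^E - \tau_K\ox\omega^E\|_1 \leq \frac{\varepsilon}{2} + \frac{\varepsilon}{2} = \varepsilon$, at the cost of only a factor of two in the error. The remaining $\sqrt{2\varepsilon}$ in the statement accounts for translating between the purified-distance smoothing implicit in $C^\delta_{d,IO}$ (whose parameter $\delta$ is a purified distance) and the trace-distance requirement in $\ell_{\textrm{ext}}^{\cdot}$, via $\frac12\|\cdot\|_1 \leq P(\cdot,\cdot)$ and conversely, matching the $\varepsilon\mapsto\sqrt{2\varepsilon}$ and $\delta\mapsto\delta^2/2$-type conversions already used elsewhere in the paper.

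Putting it together: for any $\log M \leq C^{(\frac{\varepsilon}{2}-\eta)^2/2}_{d,IO}(\rho) - 2\log\frac{1}{\eta}$, the above produces a deterministic $G$ with $\frac12\|(G\ox\id)\omega^{AE} - \tau^K\ox\omega^E\|_1 \leq \sqrt{2\varepsilon}$ (after tracking the error conversions carefully), which is exactly the certificate required by the definition of $\ell^{\sqrt{2\varepsilon}}_{\textrm{ext}}(\rho)$, yielding $\ell^{\sqrt{2\varepsilon}}_{\textrm{ext}}(\rho) \geq C^{(\frac{\varepsilon}{2}-\eta)^2/2}_{d,IO}(\rho) - 2\log\frac{1}{\eta}$ for any $0<\eta<\frac{\varepsilon}{2}$. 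The only genuinely delicate bookkeeping — and the place where I would be most careful — is the chain of error-parameter conversions: purified distance vs.\ trace distance vs.\ fidelity at three different points (the Renner bound, the partial-trace triangle inequality, and the definition of $C^\delta_{d,IO}$), making sure the exponents $(\frac{\varepsilon}{2}-\eta)^2/2$ on the input side and $\sqrt{2\varepsilon}$ on the output side are exactly what the compositions of these conversions produce and that the additive $-2\log\frac1\eta$ term from the privacy-amplification step survives unchanged.
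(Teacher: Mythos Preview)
Your core idea---read off the function $G$ directly from Renner's privacy-amplification step inside Theorem~\ref{thm:achieve}, then use the partial-trace-plus-triangle trick to replace the auxiliary $\sigma^E$ by the actual marginal $\omega^E$---is correct and genuinely simpler than the paper's route. But your opening move is mis-stated: Corollary~\ref{thm:diioachieve} does \emph{not} give you a DIIO map achieving $\log M = C^{(\frac{\varepsilon}{2}-\eta)^2/2}_{d,IO}(\rho)$. It only guarantees the specific construction of Theorem~\ref{thm:achieve}, which achieves $\log M = C_{\min}^{\frac{\varepsilon}{2}-\eta}(\rho)-2\log\frac{1}{\eta}$; the optimal IO map realising $C^{\delta}_{d,IO}$ need not have the $G$-based structure at all. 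The conversion to $C_{d,IO}$ must therefore come at the \emph{end}, via Theorem~\ref{thm:upper} (i.e.\ $C^{\delta}_{d,IO}\leq C_{\min}^{\sqrt{2\delta}}$ with $\delta=(\tfrac{\varepsilon}{2}-\eta)^2/2$), not at the beginning. Once reordered, your argument is clean---and in fact yields $\ell^{\varepsilon}_{\textrm{ext}}(\rho)\geq C^{(\frac{\varepsilon}{2}-\eta)^2/2}_{d,IO}(\rho)-2\log\frac{1}{\eta}$, which is strictly stronger than the stated $\ell^{\sqrt{2\varepsilon}}_{\textrm{ext}}$; you need not hunt for where the extra $\sqrt{2\varepsilon}$ comes from, because your approach does not incur it.

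By contrast, the paper never touches Renner's trace-distance guarantee on $G$ directly. It starts from the \emph{distillation} fidelity $F(\Lambda(\rho),\Psi_M)^2\geq 1-\varepsilon$, lifts it to a bipartite statement $F(\Omega^{AE},\Psi_M\ox\Omega^E)^2\geq(1-\varepsilon)^2$ via a purification/Schmidt argument, exploits the DIO commutation $\Lambda\circ\Delta=\Delta\circ\Lambda$ to pass from $\psi^{AE}$ to $\omega^{AE}$, and then uses Remark~\ref{rem:DIIO} to identify $\Delta\circ\Lambda$ with the deterministic $G$. The final $\sqrt{2\varepsilon}$ arises from the fidelity-to-trace conversion $\sqrt{1-(1-\varepsilon)^2}\leq\sqrt{2\varepsilon}$. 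Your shortcut trades that entire detour for one triangle inequality, at the cost of relying on the internal structure of the Theorem~\ref{thm:achieve} protocol rather than on its output guarantee alone; the paper's route, though longer, would apply to any DIIO distillation map whose dephased action is deterministic.
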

\begin{proof}
Recall that the distillable coherence under IO is given by
\[
C_{d,IO}^\varepsilon(\rho)
        = \max_{\Lambda\in \mathcal{IO}}
          \bigl\{\log M \,:\, F(\Lambda(\rho),\Psi_M)^2 \geq 1-\varepsilon \bigr\}.
\]
Suppose $\Lambda$ is the IO that achieves the right hand side of Theorem \ref{thm:achieve},
that is $C^{\frac{\varepsilon}{2}-\eta}_{\min}(\rho) - 2\log\frac{1}{\eta} = \log M$ where $0< \eta < \frac{\varepsilon}{2}$ and
$F(\Lambda(\rho),\Psi_M)^2 \geq 1-\varepsilon$.
Note that $\Lambda$ is not necessarily an optimal IO coherence distillation operation.
For the purification state $\ket{\psi}^{AE}$, the resulting state
by applying $\Lambda$ on system $A$ is given by
\[
 \Omega^{AE}=(\Lambda\otimes \id)(\ket{\psi}\bra{\psi}^{AE}),
\]
and it follows that
\[
 F(\Omega^{AE}, \Psi_M\ox \Omega^E)^2\ge (1-\varepsilon)^2.
\]
To prove that, suppose a purification of $\Omega^{AE}$ is $\ket{\Omega}^{AA'E}$.
Then, considering an orthogonal basis $\{\ket{\phi_x}^A\}$ of system $A$
such that $\ket{\phi_0}^A=\ket{\Psi_{M}}$, we can write
\[
  \ket{\Omega}^{AA'E} = \sum_{x}\alpha_x\ket{\phi_x}^A\ket{\psi_x}^{A'E}.
\]
As $F(\Lambda(\rho),\Psi_M)^2\geq 1-\varepsilon$ and
\[
  \Lambda(\rho) = \sum_{xx'}\alpha_x\alpha_x'\bra{\psi_{x'}}\psi_x\rangle^{A'E}\ket{\phi_x}\bra{\phi_{x'}}^A,
\]
we have
$F(\Lambda(\rho),\Psi_M)^2 = |\alpha_0|^2\ge 1-\varepsilon$.
The fidelity between $\Omega^{AA'E}$ and $\Psi_M^A\otimes\psi_0^{A'E}$ is
\[
  F({\Omega}^{AA'E},\Psi_M^A\otimes{\psi_0}^{A'E})^2 = |\alpha_0|^2\ge 1-\varepsilon.
\]
Denoting $\Omega^{A'E} = \tr_A {\Omega}^{AA'E}$, then the
fidelity between ${\Omega}^{AA'E}$ and $\Psi_M^A\otimes\Omega^{A'E}$ is
\[
  F({\Omega}^{AA'E},\Psi_M^A\otimes\Omega^{A'E})^2 = |\alpha_0|^4\ge (1-\varepsilon)^2.
\]
Then $ F(\Omega^{AE}, \Psi_M\ox \Omega^E)^2\ge (1-\varepsilon)^2 $ can be obtained by tracing out system $A'$.
 	
Applying the dephasing operation $\Delta$ on system $A$,
\[
 F\left((\Delta\ox\id)\Omega^{AE}, \tau^K\ox \Omega^E\right)^2\ge F(\Omega^{AE}, \Psi_M\ox \Omega^E)^2.
\]
From the Remark \ref{rem:DIIO}, we know that $\Lambda$ is also a DIO which
commutes with $\Delta$, so we have equivalently
\begin{equation}
\begin{aligned}
 F\left((\Lambda\ox\id)\omega^{AE}, \tau^K\ox \Omega^E\right)^2&\ge F(\Omega^{AE}, \Psi_M\ox \Omega^E)^2 \\
 &\ge (1-\varepsilon)^2.
\end{aligned}
\end{equation}
In order to construct a deterministic $G$, we apply another dephasing operation
after the incoherent channel $\Lambda$,
\begin{equation}
\begin{aligned}
 F\left((\Delta\circ\Lambda\ox\id)\omega^{AE}, \tau^K\ox \Omega^E\right)^2 \ge (1-\varepsilon)^2,
\end{aligned}
\end{equation}
hence
\[
 \frac{1}{2}\left\|(\Delta\circ\Lambda\ox\id)\omega^{AE} - \tau^K\ox \Omega^E\right\|_1 \leq \sqrt{2\varepsilon}.
\]

Note that from Remark \ref{rem:DIIO}, the map $\Delta\circ\Lambda$ acts on the
incoherent basis states $\proj{x}$ as

\begin{align}
\Delta[\Lambda(\op{x}{x})]=\op{G(x)}{G(x)},
\end{align}
which is deterministic. From the achievable distillation IO map, we can construct an extractor and obtain
\begin{equation}
 \ell^{\sqrt{2\varepsilon}}_{\textrm{ext}}(\rho) \geq C^{\frac{\varepsilon}{2}-\eta}_{\min}(\rho) - 2\log\frac{1}{\eta}.
\end{equation}
Recall the result in Theorem \ref{thm:upper},
\begin{equation}
  C^\varepsilon_{d,IO}(\rho) \leq C_{\min}^{\sqrt{\varepsilon(2-\varepsilon)}}(\rho)
                             \leq C_{\min}^{\sqrt{2\varepsilon}}(\rho),
\end{equation}
and we obtain
\begin{equation}
 \ell^{\sqrt{2\varepsilon}}_{\textrm{ext}}(\rho) \geq C^{(\frac{\varepsilon}{2}-\eta)^2/2}_{d,IO}(\rho) - 2\log\frac{1}{\eta},
\end{equation}
finishing the proof.
\end{proof}

Combining with Eq.~(\ref{extrac-Io}), we have
\begin{equation}
   C^{\sqrt{8\varepsilon}}_{d,IO}(\rho)
	\geq \ell^{\sqrt{2\varepsilon}}_{\textrm{ext}}(\rho)
	\geq C^{(\frac{\varepsilon}{2}-\eta)^2/2}_{d,IO}(\rho) - 2\log\frac{1}{\eta}.
\end{equation}

In the regime of vanishingly small $\varepsilon$, the distillable coherence rate
$C_{d,IO}(\rho)$ and $\ell_{\textrm{ext}}(\rho)$ are hence essentially the same. Whether $C_{d,DIO}(\rho)$ and $\ell_{\textrm{ext}}(\rho)$  are the same is still an open problem. Though DIO can commute with dephasing operation, the difficulty stems from that the combination of DIO and extractor (the blue part in Fig.~\ref{fig:extract}(d)) may be not deterministic thus not a valid extractor.


\section{Discussion}
%

We have considered the problem of one-shot coherence distillation under
the classes MIO, DIO, IO, and SIO of incoherent operations.
Our results indicate that the distillation rates under
IO, MIO, DIO are roughly the same,
up to different smoothing parameters and universal additive terms.
The results allow us to recover the asymptotic (many-copy) limit, in
which the distillation rates for all these three classes tend to be
relative entropy of coherence, which is consistent with the previous
results in \cite{Winter16,Regula-one-shot,Eric:MIO-CoF}.

The smallest class for which we have been able to show a non-trivial
distillation of coherence is that of dephasing-covariant incoherent IOs,
DIIO = IO$\cap$DIO.
On the other hand, interestingly, there is a gap between distillation rates
under SIO and DIIO, both in the one-shot, and more importantly in the
asymptotic regime. As a matter of fact, we showed that there is
\emph{bound coherence} under SIO; no pure coherence can be distilled under
SIO from these states though they possess coherence, as shown by
distillable coherence under IO.

An interesting future direction is then to study the case involving another
system to help this distillation process, which referred as assisted coherence
distillation for such bound coherence with
SIO \cite{vijayan2018one,regula2018non-asymptotic,PhysRevLett.116.070402,PhysRevX.7.011024}.
Furthermore, our work also connects the distillation of coherence to randomness
extraction. The distillation process is also related to the decoupling in
cryptography. Thus our results also shed light on other quantum information
processing tasks like random number generation, extraction and cryptography.

\medskip
\emph{Note added:} After completion of this work and circulating a
preliminary preprint, Ludovico Lami \emph{et al.}~\cite{lami2018generic} have shown that the bound coherence under SIO is in fact
a generic phenomenon, showing that the fidelity of distilling even a
single cosbit is bounded away from $1$ for all but a measure-zero set of mixed states.

\section*{Acknowledgment}
%
The authors thank Mar{\'\i}a Garc{\'\i}a D{\'\i}az, Xiongfeng Ma, Hongyi Zhou,
Ludovico Lami and Bartosz Regula
for various insightful discussions on coherence theory.

QZ acknowledges support by the National Natural Science Foundation of China Grant No.~11674193.
XY was supported by BP plc and the EPSRC National Quantum Technology Hub in Networked Quantum Information Technology (EP/M013243/1).
EC was supported by the National Science Foundation (NSF) Early CAREER Award No.~1352326.
AW acknowledges support by the Spanish MINECO (project FIS2016-80681-P),
with the support of FEDER funds, and the Generalitat de
Catalunya (CIRIT project 2017-SGR-1127).

\bibliographystyle{IEEEtran}

\vfill\phantom{.}

\end{document}